\newcommand\abs[1]{\lvert #1\rvert}
\newtheorem{THM}{Theorem}[section]
\newtheorem{DEF}[THM]{Definition}
\newtheorem{LEM}[THM]{Lemma}
\newtheorem{COR}[THM]{Corollary}
\newtheorem{PROP}[THM]{Proposition}
\newtheorem{CONJ}[THM]{Conjecture}
\newcommand\cl{\operatorname{Clos}}
\newcommand\td{\operatorname{td}}
\def\cf#1{{\mathscr#1}}
\def\mx#1{\mbox{\boldmath$#1$}}
\def\prebox#1{\mathop{\mbox{\rm #1}}}
\newcommand\modop{\operatorname{mod}}
\def\TM#1#2{{\mathcal{T\!M}}_{#2}(#1)}
\def\SC#1{{\mathcal{SC}}(#1)}
\def\BSC#1{{\mathcal{BSC}}(#1)}
\newcommand{\FO}{\ensuremath{\mathrm{FO}}\xspace}
\newcommand{\MSO}{\ensuremath{\mathrm{MSO}}\xspace}
\newcommand{\CMSO}{\ensuremath{\mathrm{CMSO}}\xspace}
\newcommand{\MSOi}{\ensuremath{\mathrm{MSO}_1}\xspace}
\newcommand{\MSOii}{\ensuremath{\mathrm{MSO}_2}\xspace}
\newcommand{\CMSOi}{\ensuremath{\mathrm{CMSO}_1}\xspace}
\newcommand{\PMSOi}{\ensuremath{\mathrm{C_2MSO}_1}\xspace}
\theoremstyle{definition}
\newcommand{\shortversion}[1]{}
\newcommand{\longversion}[1]{#1}
\begin{document}
\title[Tree-depth and vertex-minors]{Tree-depth and Vertex-minors}
\author{Petr Hliněný}
\address[P.~Hliněný, J.~Obdržálek, S.~Ordyniak]{Faculty of Informatics, Masaryk University,
	Botanick\'a 68a, Brno, Czech Republic}
\email{\{hlineny,obdrzalek\}@fi.muni.cz, sordyniak@gmail.com}
\author{O-joung Kwon}
\address[O.~Kwon]{Department of Mathematical Sciences, KAIST, 291 Daehak-ro
  Yuseong-gu Daejeon, 305-701 South Korea} 
\email{ojoung@kaist.ac.kr}
\author{Jan Obdržálek}
\author{Sebastian Ordyniak}
\thanks{P.~Hliněný and J.~Obdržálek have been supported by 
	the Czech Science Foundation, project no.~14-03501S.
	S.~Ordyniak have been supported by the European Social Fund 
	and the state budget of the Czech Republic
	under project CZ.1.07/2.3.00/30.0009 (POSTDOC I)}
\thanks{O.~Kwon have been supported by Basic Science Research
  Program through the National Research Foundation of Korea (NRF)
  funded by the Ministry of Education, Science and Technology
  (2011-0011653).}
\date{}

\begin{abstract}
In a recent paper~\cite{Kwon2013}, Kwon and Oum claim that every graph of bounded
rank-width is a pivot-minor of a graph of bounded tree-width
(while the converse has been known true already before).
We study the analogous questions for ``depth'' parameters of graphs,
namely for the tree-depth and related new shrub-depth.
We show that shrub-depth is monotone under taking vertex-minors,
and that every graph class of bounded shrub-depth can be obtained via
vertex-minors of graphs of bounded tree-depth.
We also consider the same questions for bipartite graphs and pivot-minors.
\end{abstract}

\keywords{tree-depth; shrub-depth; vertex-minor; pivot-minor}
\maketitle

\section{Introduction} \label{sec:intro}

Various notions of graph containment relations (e.g. graph minors) play an important part in structural graph
theory. Recall that a graph $H$ is a minor of a graph $G$ if $H$ can be
obtained from $G$ by a sequence of edge contractions, edge deletions and
vertex deletions. In their seminal series of papers, Robertson and Seymour
introduced the notion of tree-width and showed the following: The tree-width of
a minor of $G$ is at most the tree-width of $G$ and, moreover, for each $k$
there is a finite list of graphs such that a graph $G$ has tree-width at
most $k$ if, and only if, no graph in the list is isomorphic to a minor of
$G$. This, among other things, implies the existence of a polynomial-time
algorithm to check that the tree-width of a graph is at most $k$.

There have been numerous attempts to extend this result to (or find a similar
result for) ``width'' measures other than tree-width . The
most natural candidate is clique-width, a measure generalising tree-width defined by Courcelle and
Olariu~\cite{CO00}. However, the quest to prove a similar result for this
measure has been so far unsuccessful. For one, taking the graph minor relation is
clearly not sufficient as every graph on $n>1$ vertices is a minor of the
complete graph $K_n$, clique-width of which is 2.

However Oum~\cite{Oum05} succeeded in finding the appropriate containment
relation -- called \emph{vertex-minor} -- for the notion of rank-width,
which is closely related to clique-width. (More precisely, if the clique-width of
a graph is $k$, then its rank-width is between $\log_2(k+1)-1$ and $k$.)
Vertex-minors are based on the operation of local complementation: taking a
vertex $v$ of a graph $G$ we replace the subgraph induced on the neighbours
of $v$ by its edge-complement, and denote the resulting graph by $G*v$. We then
say that a graph $H$ is a vertex-minor of $G$ if $H$ can be obtained from
$G$ by a sequence of local complementations and vertex
deletions. In~\cite{Oum05} it was shown that if $H$ is a vertex-minor of
$G$, then its rank-width is at most the rank-width of $G$.

Another graph containment relation, the \emph{pivot-minor}, also
defined in~\cite{Oum05}, is closely related to vertex-minor. Pivot-minors are
based on the operation of edge-pivoting: for an edge $e=\{u,v\}$ of a graph $G$ we
perform the operation $G*u*v*u$.  Then a graph $H$ is a pivot-minor of $G$ if
it can be obtained from $G$ by a sequence of edge-pivotings and vertex
deletions. It follows from the definition that  every pivot-minor is
also a vertex-minor.

This brings an interesting question: What is the exact relationship between
various width measures with respect to these new graph containment
relations? Recently, it was shown that every graph of rank-width $k$ is a
pivot-minor of a graph of tree-width at most $2k$~\cite{Kwon2013}. In this
paper we investigate the existence of similar relationships for two ``shallow''
graph width measures: tree-depth and shrub-depth.

\emph{Tree-depth}~\cite{no06} is a graph invariant which
intuitively measures how far is a graph from being a star. Graphs of bounded
tree-depth are sparse and play a central role in the theory of graph
classes of bounded expansion. \emph{Shrub-depth}~\cite{GHNOOR12} is a very
recent graph invariant, which was designed to fit into the gap between
tree-depth and clique-width. (If we consider tree-depth to be the
``shallow'' counterpart of tree-width, then shrub-depth can be thought
of as a ``shallow'' counterpart of clique-width.)

Our results can be summarised as follows. We start by showing that
shrub-depth is monotone under taking vertex-minors
(Corollary~\ref{cor:shrubd-monotone}). Next we prove that every graph class
of bounded shrub-depth can be obtained via vertex-minors of graphs of
bounded tree-depth (Theorem~\ref{thm:tdtoshrubd}). Note that, unlike for
rank-width and tree-width, restricting ourselves to pivot-minors is not
sufficient. Indeed, this is because, as we prove in~Proposition~\ref{prop:noclique}, graphs of bounded
tree-depth cannot contain arbitrarily large cliques as pivot-minors.
Interestingly, we are however able to show the same result for pivot-minors
if we restrict ourselves to bipartite graphs, which were, in a similar connection, investigated already
in \cite{Kwon2013}. 
\longversion{
In particular, our main result of the last section is that for any
class of bounded shrub-depth there exists an integer $d$ such that any
bipartite graph in the class is a pivot-minor of a graph of tree-depth $d$.
}
\shortversion{
In particular, we show that for any
class of bounded shrub-depth there exists an integer $d$ such that any
bipartite graph in the class is a pivot-minor of a graph of tree-depth $d$.
}

\longversion{
\section{Preliminaries} \label{sec:prelim}

	In this paper, all graphs are finite, undirected and simple.
	A {\em tree} is a connected graph with no cycles, and it is 
	{\em rooted} if some vertex is designated as the root.
	A leaf of a rooted tree is a vertex other than the root having just
	one neighbour.
	The height of a rooted tree is the maximum length of a path starting
	in the root (and hence ending in a leaf).
	Let $G$ be a graph. 
	We denote $V(G)$ as the vertex set of $G$ and
	$E(G)$ as the edge set of $G$.
	For $v\in V(G)$, let $N_G(v)$ be the set of the neighbours of $v$ in $G$.

	We sometimes deal with {\em labelled graphs} $G$, which means that
	every vertex of $G$ is assigned a subset (possibly empty) of a given
	finite label set.
	A graph is {\em$m$-coloured} if every vertex is assigned exactly one
	of given $m$ labels (this notion has no relation to ordinary graph colouring).

We now briefly introduce the \emph{monadic second order logic} (\MSO) over graphs
and the concept of \FO (\MSO) graph interpretation. \MSO is the extension of
first-order logic (\FO) by quantification over sets, and comes in two
flavours, \MSOi and \MSOii, differing by the objects we are allowed to
quantify over: 
\begin{DEF}[\MSOi logic of graphs]
  The language of \MSOi consists of expressions built from the following
  elements:
  \begin{itemize}
  \item variables $x,y,\ldots$ for vertices, and $X,Y$ for sets of vertices,
  \item the predicates $x\in X$ and $\prebox{edge}(x,y)$ with the standard
   meaning,
  \item equality for variables, quantifiers $\forall$ and $\exists$ ranging over
   vertices and vertex sets, and the standard Boolean connectives.
  \end{itemize}
\end{DEF}
\MSOi logic can be used to express many interesting graph
properties, such as 3-colour\-ability. We also mention \MSOii logic, which additionally includes quantification over edge sets and can express properties which are not \MSOi definable (e.g. Hamiltonicity).
The large expressive power of
both \MSOi and \MSOii makes them a very popular choice when
formulating algorithmic metatheorems (e.g., for graphs of bounded clique-width or
tree-width, respectively). 

The logic we will be mostly concerned with is an extension of \MSOi called
\emph{Counting monadic second-order logic} (\CMSOi). In addition to the
\MSOi syntax \CMSOi allows the use of
predicates $\modop_{a,b}(X)$, where $X$ is a set variable. The semantics of
the predicate $\modop_{a,b}(X)$ is that the set $X$ has $a$ modulo $b$
elements. We use \PMSOi to denote the parity counting fragment of \CMSOi,
i.e. the fragment where the predicates $\modop_{a,b}(X)$ are restricted to $b=2$.

A useful tool when solving the model checking problem on a class of
structures is the ability to ``efficiently translate'' an instance of the problem to a
different class of structures, for which we already have an efficient model checking
algorithm. To this end we introduce simple FO/\MSOi graph interpretation, which is
an instance of the general concept of interpretability of logic
theories~\cite{Rab64} restricted to simple graphs with vertices
represented by singletons.

\begin{DEF}
\label{def:interpretation}
  A {\em FO (\MSOi) graph interpretation} is a pair $I=(\nu,\mu)$ of
  {\em FO} (\MSOi) formulae (with $1$ and $2$ free variables respectively) in the
  language of graphs, where $\mu$ is symmetric (i.e.
  $G\models\mu(x,y)\leftrightarrow\mu(y,x)$ in every graph $G$).  To each
  graph $G$ it associates a graph $G^I$,
  which is defined as follows:
\begin{itemize}
  \item The vertex set of $G^I$ is the set of all vertices $v$ of $G$ such
  that $G\models \nu(v)$;
  \item The edge set of $G^I$ is the set of all the pairs $\{u,v\}$ of
  vertices of $G$ such that $G\models \nu(u)\wedge\nu(v)\wedge\mu(u,v)$.
\end{itemize}
\end{DEF}
This definition naturally extends to the case of vertex-labelled graphs (using a
finite set of labels, sometimes called colours) 
by introducing finitely many unary relations in the language to encode the labelling.

For example, a complete graph can be interpreted in any graph (with the same
number of vertices) by letting $\nu\equiv\mu\equiv true$,
and the complement of a graph has an interpretation using
$\mu(x,y)\equiv\neg\prebox{edge}(x,y)$.

\subsection*{Vertex-minors and Pivot-minors}
For $v\in V(G)$, the \emph{local complementation} at a vertex $v$ of $G$ is
the operation which complements the adjacency between every pair of two
vertices in $N_G(v)$.  The resulting graph is denoted by $G*v$. We say that
two graphs are \emph{locally equivalent} if one can be obtained from the
other by a sequence of local complementations.  For an edge
$uv\in E(G)$, \emph{pivoting} an edge $uv$ of $G$ is defined as $G\wedge
uv=G*u*v*u=G*v*u*v$.  A graph $H$ is a \emph{vertex-minor} of $G$ if $H$ is
obtained from $G$ by applying a sequence of local complementations and
deletions of vertices.  A graph $H$ is a pivot-minor of $G$ if $H$ is
obtained from $G$ by applying a sequence of pivoting edges and deletions of
vertices.  From the definition of pivoting every pivot-minor of a graph is also its
vertex-minor.

 Pivot-minors of graphs are closely related to a matrix operation called
 pivoting. To give the exact relationship
 (Proposition~\ref{prop:pivots-minors-matrices}) we will need to introduce
 some matrix concepts.

 \subsection*{Pivoting on a Matrix}

	For two sets $A$ and $B$, we denote by $A\Delta B=(A\setminus B)\cup
        (B\setminus A)$ its symmetric difference.
	Let $\mx M$ be a $S\times T$ matrix. 
	For $A\subseteq S$ and $B\subseteq T$,
	we denote the $A\times B$ submatrix of $\mx M$ as 
	$\mx M[A,B]=(m_{i,j})_{i\in A, j\in B}$.
	If $A=B$, then $\mx M[A]=\mx M[A,A]$ and we call it a \emph{principal submatrix} of
	$\mx M$.
	If $a\in S$ and $b\in T$, then 
	we denote $\mx M_{a,b}=\mx M[\{a\}, \{b\}]$.
	The \emph{adjacency matrix} $\mx A(G)$ of $G$ is the $V(G)\times V(G)$ matrix such that 
	for $v,w\in V(G)$, $\mx A(G)_{v,w}=1$ if $v$ is adjacent to $w$ in $G$, and 
	$\mx A(G)_{v,w}=0$ otherwise.

	Let \[
\mx M=\bordermatrix{
	& S & X\setminus S\cr
	S & A & B\cr
	X\setminus S & C & D
}
\] be a $X\times X$ matrix over a field $F$. 

If $\mx A=\mx M[S]$ is non-singular, then we define \emph{pivoting} $S$ 
on the matrix $\mx M$ as \[ 
\mx M\ast S=
\bordermatrix{
	 &  S & X\setminus S\cr
 \hfill S & A^{-1} & A^{-1}B \cr
	X\setminus S & -CA^{-1}  & D-CA^{-1}B 
}.
\]
      It is sometimes called a \emph{principal pivot transformation}~\cite{Tsatsomeros2000}.
	The following theorem is useful when dealing with matrix pivoting.

\begin{THM}[Tucker \cite{Tucker60}]\label{thm:21}
Let $\mx M$ be a $X\times X$ matrix over a field.
If $\mx M[S]$ is a non-singular principal submatrix of $\mx M$, 
then for every $T\subseteq X$, 
$(\mx M\ast S)[T]$ is non-singular if and only if $\mx M[S\Delta T]$ is non-singular.
\end{THM}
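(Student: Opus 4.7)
The plan is to reinterpret the non-singularity of a principal submatrix as the bijectivity of a natural coordinate projection on the graph of the linear map $\mx M$, and then observe that the pivot operation $\mx M\mapsto\mx M\ast S$ is just a change of variables on this graph.

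First I would view $\mx M$ through its graph $\Gamma = \{(x,y) \in F^X \times F^X : y = \mx M x\}$, a linear subspace of dimension $\abs{X}$. For any $U\subseteq X$, let $\pi_U\colon \Gamma \to F^X$ be the projection that records $x_j$ for $j\notin U$ and $y_i$ for $i\in U$. Writing the block equation $y_U = \mx M[U]\,x_U + \mx M[U, X\setminus U]\,x_{X\setminus U}$ and solving for $x_U$ in terms of $x_{X\setminus U}$ and $y_U$, one sees that $\pi_U$ is a bijection if and only if $\mx M[U]$ is non-singular.

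Applying this with $U = S$ makes $\pi_S$ a linear isomorphism, and the matrix $\mx M\ast S$ is exactly the matrix representing the induced map $(y_S, x_{X\setminus S})\mapsto (x_S, y_{X\setminus S})$ on $\Gamma$ — this is precisely the block elimination of $x_S$ encoded by the defining formula (the sign conventions are consistent with this, and in any case collapse in characteristic two, which is the case of interest for adjacency matrices). The same characterisation therefore applies in the pivoted system: $(\mx M\ast S)[T]$ is non-singular if and only if the analogous projection $\pi'_T$ of $\Gamma$ through the new coordinates is a bijection.

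To conclude, I would check that $\pi'_T$ and $\pi_{S\Delta T}$ record the same coordinates of $\Gamma$, up to a permutation. Writing $T_1 = T\cap S$ and $T_2 = T\setminus S$, the projection $\pi'_T$ keeps the entries $y_{S\setminus T_1}$, $x_{(X\setminus S)\setminus T_2}$, $x_{T_1}$, $y_{T_2}$; and since $S\Delta T = (S\setminus T_1)\cup T_2$, the projection $\pi_{S\Delta T}$ keeps $x_{T_1}$, $x_{(X\setminus S)\setminus T_2}$, $y_{S\setminus T_1}$, $y_{T_2}$. These are the same entries, so $\pi'_T$ is bijective iff $\pi_{S\Delta T}$ is, and combined with the two earlier equivalences this yields the theorem. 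The main obstacle I expect is the second step — verifying cleanly that $\mx M\ast S$ really is the change-of-variables matrix described, rather than re-deriving everything from the block formula by brute force — after which what remains is routine bookkeeping with symmetric differences.
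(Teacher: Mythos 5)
The paper does not actually give a proof of Theorem~\ref{thm:21}: it simply points the reader to Bouchet's proof as presented in Geelen's thesis. So there is no in-paper argument to compare against, and the relevant question is whether your self-contained proof is sound.

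Your proof is essentially correct, and the ``graph of the linear relation'' viewpoint is a clean and standard way to understand the principal pivot transform. The characterisation that $\pi_U$ is a bijection iff $\mx M[U]$ is non-singular is right, the coordinate bookkeeping showing that $\pi'_T$ and $\pi_{S\Delta T}$ record the same entries is right, and so the chain of equivalences closes. The only point you should tighten is the sign issue you flag but somewhat wave away. Solving $y_S = Ax_S + Bx_{X\setminus S}$, $y_{X\setminus S} = Cx_S + Dx_{X\setminus S}$ for $(x_S,y_{X\setminus S})$ in terms of $(y_S,x_{X\setminus S})$ yields the block matrix $P$ with off-diagonal blocks $-A^{-1}B$ and $CA^{-1}$, whereas $\mx M\ast S$ as defined in the paper has $A^{-1}B$ and $-CA^{-1}$. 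These are not the same matrix, but $P = E\,(\mx M\ast S)\,E$ with $E=\operatorname{diag}(I_S,-I_{X\setminus S})$, and conjugation by a diagonal $\pm1$ matrix preserves the determinant of every principal submatrix, since $\det\bigl((EME)[T]\bigr)=\det(E[T])^2\det(M[T])=\det(M[T])$. This closes the gap over an arbitrary field; appealing to characteristic two would not justify the theorem at the generality in which it is stated, even if that is the only case the paper later uses.

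The upshot: your argument is a genuine proof where the paper supplies only a citation, and with the conjugation observation made explicit it is complete. It is also close in spirit to the Lagrangian/linear-relation approach of Bouchet that the cited reference follows, so you are not far from the ``official'' route either.
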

\begin{proof}
  See Bouchet's proof in Geelen~\cite[Theorem 2.7]{Geelen1995}.
\end{proof}

\begin{THM}\label{thm:22}
Let $\mx M$ be a $X\times X$ matrix over a field.
If $\mx M[S]$ and $(\mx M*S)[T]$ are non-singular, 
then $(\mx M*S)*T=\mx M*(S\Delta T)$.
\end{THM}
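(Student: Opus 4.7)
The plan has two stages: first checking that $\mx M*(S\Delta T)$ is well-defined, and then verifying the claimed matrix identity. For well-definedness, I apply Tucker's theorem (Theorem~\ref{thm:21}) to $\mx M$ with the non-singular principal submatrix $\mx M[S]$: the hypothesis that $(\mx M*S)[T]$ is non-singular forces $\mx M[S\Delta T]$ to be non-singular as well, so the right-hand side of the claimed identity is a legitimate pivot.

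For the identity itself, I would partition $X$ into the four disjoint blocks $S\cap T$, $S\setminus T$, $T\setminus S$, and $X\setminus(S\cup T)$, and write $\mx M$ as a $4\times 4$ block matrix with respect to this partition. Both sides of the desired equality are then computed directly from the block pivot formula in the definition: $(\mx M*S)*T$ is obtained by pivoting first on $S=(S\cap T)\cup(S\setminus T)$ and then on $T=(S\cap T)\cup(T\setminus S)$, while $\mx M*(S\Delta T)$ is obtained by a single pivot on $(S\setminus T)\cup(T\setminus S)$. The equality of the resulting block expressions reduces to standard Schur-complement identities, most notably the Woodbury formula, applied to the sub-blocks indexed by $S\cap T$, $S\setminus T$, and $T\setminus S$. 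Conceptually, pivoting on $U$ can be viewed as an \emph{exchange} operation on the linear relation $\{(x,\mx Mx):x\in F^X\}$ that swaps the roles of the $U$-coordinates of input and output; performing this exchange first on $S$ and then on $T$ has the same net effect as a single exchange on $S\Delta T$, because indices in $S\cap T$ are exchanged twice and hence restored, while indices outside $S\cup T$ are never touched.

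The main obstacle I anticipate is the algebraic bookkeeping: writing $(\mx M*S)*T$ back in the original $4\times 4$ block partition produces Schur complements of Schur complements that look intimidating but in fact telescope to precisely the blocks of $\mx M*(S\Delta T)$. Over a field of characteristic $2$ the exchange interpretation makes the identity transparent and all sign issues disappear, but over a general field one has to verify that the sign conventions built into the definition of $\mx M*S$ compose correctly through the iterated pivot, which is the most delicate part of the routine calculation.
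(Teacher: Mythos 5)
The paper does not prove this theorem; it simply cites Geelen~\cite[Theorem~2.8]{Geelen1995}, so there is no in-text argument to compare to and your sketch is doing real work. Your use of Theorem~\ref{thm:21} to show that $\mx M[S\Delta T]$ is non-singular, so that the right-hand side is defined, is correct and necessary. Of your two proposed routes, though, the emphasis is inverted: the exchange/linear-relation argument you offer as ``conceptual'' is the standard clean proof, while the $4\times4$ block computation you put forward as the main plan is precisely the part you yourself identify as the most delicate. In the exchange argument one associates to $\mx M$ the subspace $L_{\mx M}=\{(x,\mx Mx):x\in F^X\}\subseteq F^X\oplus F^X$, checks that for each $U\subseteq X$ there is an involution $\sigma_U$ of $F^X\oplus F^X$ depending only on $U$ with $\sigma_U(L_{\mx M})=L_{\mx M*U}$ whenever $\mx M[U]$ is non-singular, and then observes $\sigma_T\circ\sigma_S=\pm\sigma_{S\Delta T}$; since a global sign does not change a subspace, $L_{(\mx M*S)*T}=L_{\mx M*(S\Delta T)}$, which is the claim. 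Your flagged sign worry resolves neatly on this route: for the paper's convention (off-diagonal blocks $+A^{-1}B$ and $-CA^{-1}$, opposite to Tucker's ``pure exchange'' form), the required $\sigma_U$ swaps the $U$-coordinates of input and output while negating all coordinates outside $U$, and a coordinatewise check gives $\sigma_T\circ\sigma_S=-\sigma_{S\Delta T}$ on the nose, which is harmless at the subspace level. Over $\mathrm{GF}(2)$, the only field the paper ever actually applies this to, the whole sign issue evaporates, as you observe. So your sketch is essentially sound; I would lead with the subspace argument and drop the Woodbury/four-block bash rather than present the heavy route as primary.
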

\begin{proof}
See Geelen~\cite[Theorem 2.8]{Geelen1995}.
\end{proof}

We are now ready to state the relationship between pivot-minors and matrix
pivots. The proof of the following proposition uses Theorem~\ref{thm:21} and
Theorem~\ref{thm:22}, and we refer the reader to~\cite{Kwon2013} for
detailed explanation.

\begin{PROP}
  \label{prop:pivots-minors-matrices}
  Graph $H$ is a pivot-minor of $G$ if and only if $H$ is the graph whose
  adjacency matrix is $(\mx A(G)*X)[Y]$ where $X, Y\subseteq V(G)$ and $\mx
  A(G)[X]$ is non-singular.
\end{PROP}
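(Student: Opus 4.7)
The plan is to prove both directions by induction, relying crucially on the identity that, over $GF(2)$, edge-pivoting in a graph corresponds exactly to principal-pivoting on the $\{u,v\}$-block of the adjacency matrix. Before starting the induction I would verify the following two auxiliary facts. \emph{Fact 1:} if $uv\in E(G)$, then $\mx A(G\wedge uv) = \mx A(G)*\{u,v\}$. This is a direct computation: over $GF(2)$ the submatrix $\mx A(G)[\{u,v\}]=\bigl(\begin{smallmatrix}0&1\\1&0\end{smallmatrix}\bigr)$ is its own inverse, so plugging into the definition of matrix pivoting yields the rule that $\{w,x\}$ is an edge of the new graph iff $\mx A_{w,x}\oplus \mx A_{w,u}\mx A_{v,x}\oplus \mx A_{w,v}\mx A_{u,x}=1$, and one checks this coincides with the definition of $G*u*v*u$. \emph{Fact 2:} if $S\subseteq Y$ and $\mx M[S]$ is non-singular, then $(\mx M[Y])*S = (\mx M*S)[Y]$, because the formulas for the four blocks of the pivot only involve entries in rows and columns indexed by $S$ together with rows/columns indexed by $Y\setminus S$.

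For the forward direction, I would induct on the length of the sequence of pivots and deletions producing $H$ from $G$. The base case $H=G$ corresponds to $X=\emptyset$, $Y=V(G)$. If $H$ is obtained from $H'$ by a single vertex deletion and inductively $\mx A(H')=(\mx A(G)*X')[Y']$, then $\mx A(H)=(\mx A(G)*X')[Y'\setminus\{v\}]$ and we are done. If instead $H=H'\wedge uv$, then by Fact~1 and Fact~2, $\mx A(H)=\mx A(H')*\{u,v\}=((\mx A(G)*X')*\{u,v\})[Y']$; since $uv$ is an edge of $H'$, the principal submatrix $(\mx A(G)*X')[\{u,v\}]$ is non-singular, and Theorem~\ref{thm:22} collapses this to $(\mx A(G)*(X'\Delta\{u,v\}))[Y']$, as required.

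For the backward direction, I would induct on $|X|$. If $|X|=0$, then $(\mx A(G)*X)[Y]=\mx A(G)[Y]$ is the adjacency matrix of $G[Y]$, obtained by deleting $V(G)\setminus Y$. If $|X|\ge 2$ (the case $|X|=1$ is excluded because $\mx A(G)[\{v\}]=[0]$ is singular), then $\mx A(G)[X]$ is symmetric with zero diagonal and non-singular, so it has a nonzero off-diagonal entry, giving an edge $uv\in E(G[X])$. Set $G':=G\wedge uv$, so by Fact~1, $\mx A(G')=\mx A(G)*\{u,v\}$. By Theorem~\ref{thm:21} with $S=\{u,v\}$ and $T=X\setminus\{u,v\}$ (noting $S\Delta T=X$), the principal submatrix $\mx A(G')[X\setminus\{u,v\}]$ is non-singular, and by Theorem~\ref{thm:22} together with $\{u,v\}\Delta(X\setminus\{u,v\})=X$ we have $(\mx A(G')*(X\setminus\{u,v\}))[Y]=(\mx A(G)*X)[Y]$. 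Since $|X\setminus\{u,v\}|=|X|-2$, the inductive hypothesis applied to $G'$ shows the target graph is a pivot-minor of $G'$, and hence of $G$.

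The main obstacle is really only Fact~1, the bare-hands verification that graph edge-pivoting matches the matrix pivot formula on a $2\times 2$ block over $GF(2)$; once this is in hand the rest is careful bookkeeping with Theorems~\ref{thm:21} and~\ref{thm:22}. A minor subtlety to double-check in the forward direction is that after the restriction to $Y'$ the set $\{u,v\}$ still lies inside $Y'$ (which is automatic since $u,v$ are vertices of $H'$), so Fact~2 applies.
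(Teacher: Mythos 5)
Your proof is correct, and it follows exactly the route the paper indicates: the paper does not give its own proof of Proposition~\ref{prop:pivots-minors-matrices}, but remarks that the argument uses Theorems~\ref{thm:21} and~\ref{thm:22} and refers the reader to \cite{Kwon2013} for the details. Your write-up supplies precisely those details. Both auxiliary facts are right: Fact~1 is the standard $GF(2)$ computation identifying graph pivoting with principal pivoting on the $2\times 2$ block (and the resulting edge rule $\mx A_{w,x}\oplus\mx A_{w,u}\mx A_{v,x}\oplus\mx A_{w,v}\mx A_{u,x}$ matches the usual three-class description of $G\wedge uv$), and Fact~2 is a routine block-matrix check. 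The two inductions are bookkept correctly; in particular, in the forward direction the non-singularity of $\mx A(G)[X'\Delta\{u,v\}]$ needed to maintain the invariant follows from Theorem~\ref{thm:21}, which your appeal to Theorem~\ref{thm:22} already requires, and in the backward direction the parity observation that $|X|$ cannot be odd (zero-diagonal symmetric matrices over $GF(2)$ have even rank) is consistent with your step size of two.
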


\subsection*{Tree-depth}
	For a forest $T$, 
	the closure $\cl (T)$ of $T$ is the graph obtained from $T$ by making  every vertex adjacent to all of its ancestors.
	The \emph{tree-depth} of a graph $G$, denoted by $\td(G)$, is one more than the minimum height of a rooted forest $T$ such that $G\subseteq \cl(T)$.
}

\section{Shrub-depth and Vertex-minors}

In this section we show the first of our results -- that shrub-depth is
monotone under taking vertex-minors. The shrub-depth of a graph class is
defined by the following very special kind of a simple \FO interpretation:

\begin{DEF}[Tree-model \cite{GHNOOR12}]
\label{def:tree-model}
We say that a graph $G$ has a {\em tree-model of $m$ colours and depth $d$}
if there exists a rooted tree $T$ (of height $d$) such that: 
\begin{enumerate}[i.]
\item the set of leaves of $T$ is exactly $V(G)$,
\item the length of each root-to-leaf path in $T$ is exactly~$d$,
\item each leaf of $T$ is assigned one of $m$ colours
(\,i.e. $T$ is {\em $m$-coloured}), 
\item\label{it:tree-model-edge}
and the existence of an edge between $u,v\in V(G)$ depends solely
on the colours of $u,v$ and the distance between $u,v$ in $T$.
\end{enumerate}
The class of all graphs having such a tree-model is denoted by $\TM dm$.
\end{DEF}

For example, $K_n\in\TM11$ or $K_{n,n}\in\TM12$. We thus consider:

\begin{DEF}[Shrub-depth \cite{GHNOOR12}]
\label{def:shrub-depth}
A class of graphs $\cf S$ has {\em shrub-depth} $d$
if there exists $m$ such that $\cf S\subseteq\TM dm$,
while for all natural $m$ it is $\cf S\not\subseteq\TM{d-1}m$.
\end{DEF}

It is easy to see that each class $\TM dm$ is closed under complements and
induced subgraphs, but neither under disjoint unions, nor under
subgraphs. However, the class $\TM dm$ is not closed under local
complementations. On the other hand, 
to prove that shrub-depth is closed under vertex-minors it is sufficient to show that for
each $m$ there exists $m'$ such that all graphs locally equivalent to those
in $\TM dm$ belong to $\TM d{m'}$. As shrub-depth does not depend on $m$,
this will be our proof strategy.
\shortversion{
  Indeed, to prove the main result of this section, we first show that
  any class of graphs of bounded shrubdepth is closed under simple \CMSOi
  interpretations, i.e., the class of graphs obtained via a simple
  \CMSOi interpretation on a class of graphs of bounded shrub-depth
  has itself bounded shrub-depth. We then show that the set of locally equivalent
  graphs of any graph $G$ can be described via a simple \PMSOi
  interpretation on vertex-labellings of~$G$. 
  To achieve this we show that the equivalent result obtained by Courcelle and
  Oum~\cite{CO07} for so-called transductions actually already holds
  for simple \PMSOi interpretations. The details of the proof are contained in the full
  version of the paper.
}
\longversion{
Note that Definition~\ref{def:shrub-depth} is asymptotic as it makes sense 
only for infinite graph classes;
the shrub-depth of a single finite graph is always at most one.
For instance, the class of all cliques has shrub-depth $1$.
More interestingly, graph classes of certain shrub-depth are characterisable
exactly as those having simple \CMSOi interpretations in
the classes of rooted labelled trees of fixed height:

\begin{THM}[\cite{GHNOOR12,GH14}]
\label{thm:shrub-depth-interpretability}
A class $\cf S$ of graphs has a simple \CMSOi interpretation in the class of all
finite rooted labelled trees of height $\leq d$
if, and only if, $\cf S$~has shrub-depth at most~$d$.
\end{THM}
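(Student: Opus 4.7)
The plan is to prove the two implications separately, noting that the forward direction (shrub-depth $\leq d$ implies \CMSOi-interpretability) essentially amounts to unpacking definitions, while the backward direction is the substantive content and is where I would expect to do real work.

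For the forward direction, suppose $\cf S$ has shrub-depth at most $d$, so $\cf S \subseteq \TM dm$ for some finite $m$. For each $G \in \cf S$ I would take a tree-model $T_G$ of height $d$ with $m$ colours as given by Definition~\ref{def:tree-model}, and regard $T_G$ as a rooted labelled tree using one unary predicate per colour (plus, say, a predicate marking the root if convenient). Since $d$ is fixed, the property ``$x$ is a leaf'' (equivalently, ``$x$ lies at depth exactly $d$'') is \FO-definable, so I can set $\nu(x)$ to this formula. For $\mu(x,y)$, I exploit property~(\ref{it:tree-model-edge}) of the tree-model: whether $xy$ is an edge depends only on the colours of $x,y$ and on the depth $h$ of their least common ancestor (since the distance between two leaves at depth $d$ is $2(d-h)$, and $h \in \{0,\dots,d\}$ is bounded). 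I would write $\mu(x,y)$ as the finite disjunction over triples $(c_1,c_2,h)$ corresponding to edges, using an \FO-definable ``lca-at-depth-$h$'' predicate, which is expressible since $d$ is fixed. The resulting \FO interpretation is in particular a \CMSOi interpretation.

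For the backward direction, let $I=(\nu,\mu)$ be a simple \CMSOi interpretation of quantifier rank $q$ from finite rooted labelled trees of height $\leq d$ over some fixed finite label alphabet $\Sigma$. I would show that the class of interpreted graphs $\{T^I : T \text{ such a tree}\}$ has shrub-depth at most $d$ by producing, for each $T$, a tree-model of depth $d$ and with a number of colours bounded by a function of $q$, $d$, and $|\Sigma|$. The tool is a Feferman--Vaught style composition theorem for \CMSOi on trees: the \CMSOi $q$-type of a rooted labelled tree is computable from the label of the root and the multiset (counted modulo a fixed integer $N$ depending on $q$) of $q$-types of subtrees rooted at the root's children. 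I would use this to recursively define, for each node $v \in T$, a ``type'' $\tau(v)$ lying in a finite set whose size is bounded by a function $f(q,d,|\Sigma|)$, such that $\tau(v)$ determines the \CMSOi $q$-type of the subtree at $v$. Symmetrically, by iterating the composition along the root-to-$v$ path, I would also define an ``ancestor type'' $\sigma(v)$ summarising what the rest of $T$ looks like from $v$'s viewpoint, again from a finite set.

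The key claim would then be: for any two nodes $u,v$ of $T$, the truth of $\nu(u)$, $\nu(v)$, and $\mu(u,v)$ is determined by the pair of types $(\tau(u),\sigma(u))$, $(\tau(v),\sigma(v))$, and the depth $h$ of the lca of $u$ and $v$. This is proved by applying the composition theorem to the decomposition of $T$ along the two paths from the lca down to $u$ and $v$, together with all sibling subtrees hanging off these paths (whose contributions are summarised in the types). To turn this into a tree-model in the sense of Definition~\ref{def:tree-model}, I would first pad $T$ to a tree $T'$ of height exactly $d$ in which every node $v$ with $T \models \nu(v)$ is replaced by a chain ending in a fresh leaf at depth $d$; the new leaves become the vertex set of $G=T^I$, their colour is the pair $(\tau(v),\sigma(v))$ combined with the original depth of $v$, and pairwise adjacency is a function of the two colours and the lca depth of the leaves in $T'$, which is determined by their distance in $T'$ (as all leaves sit at the same depth). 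This yields $G \in \TM d{m'}$ for $m'$ depending only on $I$.

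The main obstacle I anticipate is the composition theorem itself for \CMSOi rather than plain \MSOi: the mod-counting predicates must be propagated through the recursion, which forces the ``types'' to record residues modulo a fixed $N$ of the numbers of children of each subtype, and one has to verify that a finite such bookkeeping suffices for quantifier rank $q$. A second delicate point is the padding step, which must be carried out so that the lca-depth information in the padded tree $T'$ faithfully reflects the information the original formula $\mu(u,v)$ would use in $T$; handling $\nu$-nodes that are not originally leaves (and that may have $\nu$-descendants) requires encoding enough of the original structure into the colours rather than into the tree shape.
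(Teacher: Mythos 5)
The paper does not actually prove this theorem; its ``proof'' is a citation. It attributes the \MSOi version to \cite{GHNOOR12}, notes that the proof there rests on one technical claim (kernelization of \MSO on trees of bounded height), and appeals to \cite[Section~3.2]{GH14} for the extension of that claim to \CMSO. Your outline reconstructs the argument from scratch, and its central tool---a Feferman--Vaught/composition theorem for \CMSO on bounded-height trees, used to replace $\nu$ and $\mu$ by finitely many ``types'' so that adjacency depends only on the types and the lca-depth---is precisely the kernelization claim the paper delegates to the citations, so this is not a genuinely different route so much as a fuller version of what the paper leaves implicit. The forward direction is, as you say, a routine unpacking of Definition~\ref{def:tree-model}. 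The two delicate spots you flag are indeed where the work lives: propagating the $\modop_{a,b}$ predicates through the composition (which forces the type to record residues of subtree-type counts modulo a fixed $N$ in addition to thresholded counts), and padding the tree so that lca-depths in the resulting tree-model carry the same information $\mu$ uses in $T$, in particular for $\nu$-nodes that are ancestors of other $\nu$-nodes; for the latter you correctly observe that the original depth must be baked into the colour, since lca-depth alone in the padded tree does not distinguish ``$u$ is an ancestor of $v$'' from ``$u$ and $v$ branch at that depth''. One caveat worth making explicit: for your key claim to hold, the ancestor type $\sigma(v)$ must record, per ancestor level, that ancestor's label together with the mod-$N$-thresholded multiset of $q$-types of the subtrees at its \emph{other} children; only then do $(\tau(u),\sigma(u))$, $(\tau(v),\sigma(v))$ and the lca-depth jointly reconstruct (by composition along the two root-to-leaf paths, subtracting off the branch containing the other vertex) enough of the two-pointed $q$-type of $T$ to decide $\mu(u,v)$. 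With that definition the claim is standard and I see no gap in the plan.
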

\begin{proof}[Proof sketch]
In \cite{GHNOOR12} this statement occurs with a little shift---involving
\MSOi logic instead of \CMSOi.
However, since the proof in \cite{GHNOOR12} builds everything on one technical
claim (kernelization of \MSO on trees of bounded height)
which has been subsequently extended to \CMSO in \cite[Section~3.2]{GH14},
the full statement follows as well.
\end{proof}
Note that the above theorem implies that
any class of graphs of bounded shrubdepth is closed under simple \CMSOi
interpretations, i.e., the class of graphs obtained via a simple
\CMSOi interpretation on a class of graphs of bounded shrub-depth
has itself bounded shrub-depth. This is one of the two essential ingredients we need to prove that
shrub-depth is closed under vertex-minors. The other ingredient is the following technical claim:

\begin{LEM}[Courcelle and Oum \cite{CO07}]
\label{lem:interpret-localeq}
For a graph $G$, let $\cf L(G)$ denote the set of graphs which are locally
equivalent to $G$.
Then there exists a simple \PMSOi interpretation such that each such
$\cf L(G)$ is interpreted in vertex-labellings of~$G$.
\end{LEM}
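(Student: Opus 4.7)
The plan is to use Bouchet's algebraic characterisation of local equivalence, which parameterises the set of graphs locally equivalent to $G$ by a constant amount of binary data per vertex, and then to translate the resulting matrix identity over $\mathrm{GF}(2)$ into $\PMSOi$. The logical observation that makes this work is that solving a binary linear system reduces to conjunctions of parity conditions on subsets of vertices---precisely what $\PMSOi$ adds to $\MSOi$.

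Concretely, Bouchet's theory yields that $H$ is locally equivalent to $G$ if and only if there exist diagonal binary matrices $\alpha,\beta,\gamma,\delta$ on $V(G)$, satisfying the pointwise constraint $\alpha_v\delta_v+\beta_v\gamma_v=1$ over $\mathrm{GF}(2)$ at every vertex, such that $B:=\gamma\mx A(G)+\delta$ is non-singular and
\[
\mx A(H)\;=\;(\alpha\mx A(G)+\beta)\,B^{-1}.
\]
The pointwise constraint leaves only six admissible quadruples $(\alpha_v,\beta_v,\gamma_v,\delta_v)$, which I would encode as a fixed six-colour vertex labelling of $G$; this is the input to the interpretation produced by the lemma.

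The interpretation $I=(\nu,\mu)$ is then built with $\nu\equiv\mathrm{true}$ and the work concentrated in $\mu(u,v)$. For each $v$, the $v$-th column $y$ of $B^{-1}$ is the unique solution of $By=e_v$, where $e_v$ is the standard basis vector. Identifying $y$ with its support $Y\subseteq V(G)$, the system $By=e_v$ unfolds vertex-by-vertex into the parity condition, valid for every $w\in V(G)$:
\[
\gamma_w\cdot\abs{N_G(w)\cap Y}+\delta_w\cdot[w\in Y]\;\equiv\;[w=v]\pmod 2.
\]
Let $\Phi(Y,v)$ be the conjunction of these conditions; non-singularity of $B$ makes $Y$ uniquely determined by $v$. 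Moreover, since $\mx A(H)_{u,v}=\bigl((\alpha\mx A(G)+\beta)\,y\bigr)_u$,
\[
\mx A(H)_{u,v}\;\equiv\;\alpha_u\cdot\abs{N_G(u)\cap Y}+\beta_u\cdot[u\in Y]\pmod 2.
\]
Hence $\mu(u,v)\equiv\exists Y\bigl(\Phi(Y,v)\wedge\Psi(Y,u)\bigr)$, where $\Psi(Y,u)$ asserts that the above right-hand side equals $1$; both $\Phi$ and $\Psi$ are uniformly expressible in $\PMSOi$ via $\modop_{\cdot,2}$ applied to the $\MSOi$-definable set $N_G(w)\cap Y$, with the constants $\alpha_w,\beta_w,\gamma_w,\delta_w$ read off from the label of~$w$.

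The main obstacle I anticipate is the clean invocation of Bouchet's matrix identity: his original results are formulated in the language of isotropic systems and supplementary pairs, and some care is needed to recast them in the matrix form above with the pointwise $\mathrm{GF}(2)$ constraint. Once this algebraic step is in place, verifying both directions of the interpretation---every admissible labelling produces a graph in $\cf L(G)$, and every $H\in\cf L(G)$ arises from some such labelling---is immediate from Bouchet's theorem, and verifying symmetry of $\mu$ follows from symmetry of $\mx A(H)$. The reduction of matrix inversion over $\mathrm{GF}(2)$ to $\PMSOi$ via parity-counting on neighbourhoods is then routine.
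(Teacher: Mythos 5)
Your proposal is correct in spirit but takes a genuinely different route from the paper. The paper's proof defers almost entirely to Courcelle and Oum~\cite{CO07}: it takes their Corollary~6.4, which is phrased in terms of monadic second-order \emph{transductions} over isotropic systems, and argues (through Propositions~6.1--6.3 of~\cite{CO07}) that the constructed transduction is in fact a simple $\PMSOi$ interpretation when the nine parameter sets involved are recast as a vertex-labelling of~$G$. Your proposal instead bypasses isotropic systems and transductions altogether: you invoke Bouchet's fractional-linear (Möbius) characterisation of local equivalence over $\mathrm{GF}(2)$, encode the pointwise $\mathrm{SL}_2(\mathrm{GF}(2))$ action as a six-colour labelling, and then build the interpretation formula $\mu(u,v)$ explicitly, expressing the column of $B^{-1}$ as the unique solution of a linear system via parity conditions $\modop_{\cdot,2}$ on neighbourhoods intersected with the guessed set $Y$. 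This is more self-contained and produces a concrete formula, at the cost of the algebraic step you flagged: extracting the clean matrix identity (together with the symmetry of the resulting $\mx A(H)$) from Bouchet's isotropic-system formulation is exactly where the real content lies, and that is the same algebraic core the paper sidesteps by citing~\cite{CO07}. Two minor technical points to tighten: $\modop_{a,2}(X)$ syntactically applies to set variables, so the sets $N_G(w)\cap Y$ must first be captured by an existentially quantified set variable inside the $\forall w$ scope (routine, but worth stating); and since the lemma only claims that every $H\in\cf L(G)$ arises from \emph{some} labelling (this is how it is used in Theorem~\ref{thm:shrub-local-equal}), labellings yielding a singular $B$ need not produce anything meaningful, which your construction tacitly and correctly exploits. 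With the algebraic citation filled in, your argument is valid and arguably cleaner than the indirect route through~\cite{CO07}.
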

\begin{proof}[Proof sketch]
Again, \cite[Corollary~6.4]{CO07} states nearly
the same what we claim here.
The only trouble is that \cite{CO07} speaks about
more general so-called transductions.
Here we briefly survey that the transduction constructed in
\cite[Corollary~6.4]{CO07} is really a simple \PMSOi
interpretation (we have to stay on an informal level since a formal
introduction to all necessary concepts would take up several pages):
\begin{enumerate}[i.]
\item 
In \cite{CO07} local complementations of a
graph $G$ are treated via a so called isotropic system $S=S(G)$.
It is, briefly, a set of $V(G)$-indexed three-valued vectors,
and so $S$ can be described on the ground set $V(G)$ by a collection
of triples of disjoint sets.
This representation is definable in \PMSOi
\cite[Proposition~6.2]{CO07}.
\item 
The set of graphs locally equivalent to $G$ then corresponds to the
set of isotropic systems strongly isomorphic to $S$.
A strong isomorphism of isotropic systems on the ground set $V(G)$
is expressed in \MSOi with respect to a suitable $6$-partition of $V(G)$ by
\cite[Proposition~6.1]{CO07}.
\item 
Finally, a graph $H$ is locally equivalent to $G$ if and only if $H$ is the fundamental
graph of some (not unique) $S'\simeq S$ with respect to a special vector
of~$S'$, which again has a \PMSOi expression with respect to a triple of
subsets of $V(G)$ describing the vector (as in point i.)
by \cite[Proposition~6.3]{CO07}.
\end{enumerate}
Note that all the aforementioned \PMSOi expressions are on the same ground
set $V(G)$.
In the desired interpretation $I$ we treat the nine parameter sets of (ii.) and
(iii.) as a vertex-labelling of $G$, which consequently can interpret any $H$
locally equivalent to $G$ using \PMSOi.
\end{proof}
}

\begin{THM}
\label{thm:shrub-local-equal}
For a graph class $\cf C$, let $\cf L(\cf C)$ denote the class of graphs which 
are locally equivalent to a member of $\cf C$.
Then the shrub-depth of $\cf L(\cf C)$ is equal to the shrub-depth of $\cf C$.
\end{THM}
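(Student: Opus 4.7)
The inequality $\td$-direction that is trivial is $\mathrm{sd}(\cf C)\le\mathrm{sd}(\cf L(\cf C))$, since $\cf C\subseteq\cf L(\cf C)$ (a graph is locally equivalent to itself via the empty sequence of local complementations). All the work is in the opposite inequality.

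My plan is to chain together the two tools already at our disposal: Theorem~\ref{thm:shrub-depth-interpretability} and Lemma~\ref{lem:interpret-localeq}. Assume $\cf C$ has shrub-depth at most $d$. By Theorem~\ref{thm:shrub-depth-interpretability} there is a simple \CMSOi interpretation $I_1$ such that every $G\in\cf C$ is of the form $T^{I_1}$ for some rooted labelled tree $T$ of height $\leq d$ (from a fixed finite label set). By Lemma~\ref{lem:interpret-localeq} there is a simple \PMSOi interpretation $I_2$ (in particular a simple \CMSOi interpretation) which, given $G$ decorated by an auxiliary vertex-labelling $\lambda$, produces any prescribed member of $\cf L(G)$ as $(G,\lambda)^{I_2}$.

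The key step is composition. Given $H\in\cf L(\cf C)$, pick $G\in\cf C$ with $H\in\cf L(G)$, and the corresponding tree $T$ with $T^{I_1}=G$. An auxiliary vertex-labelling $\lambda$ of $G$ is carried by the leaves of $T$ (since the simple interpretation $I_1$ identifies vertices of $G$ with certain elements of $T$, by the ``simple'' requirement on interpretations with the standard extension to labelled structures), and can thus be encoded by enlarging the finite label set of $T$. Because simple \CMSOi interpretations compose to a simple \CMSOi interpretation, the composition $I=I_2\circ I_1$ is a single simple \CMSOi interpretation which produces $H$ from the enriched tree $T'$ of the same height $\leq d$. Hence $\cf L(\cf C)$ is interpretable by a single simple \CMSOi interpretation in the class of rooted labelled trees of height $\leq d$ (over a larger but still finite label set), and applying Theorem~\ref{thm:shrub-depth-interpretability} in the reverse direction yields $\mathrm{sd}(\cf L(\cf C))\le d$.

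The main obstacle I anticipate is checking that the composition is really a \emph{simple} \CMSOi interpretation, and that an auxiliary vertex-labelling of $G$ really can be pushed back to a labelling of the leaves of $T$ without leaving the framework of Theorem~\ref{thm:shrub-depth-interpretability}. The former is standard for simple interpretations because vertices of the intermediate graph $G$ are represented by single elements of $T$ (namely the leaves), so the formulae of $I_2$ can be relativised to $I_1$ in the obvious way. The latter works because $I_1$ puts the leaves of $T$ in bijection with $V(G)$, so each unary predicate used to encode $\lambda$ on $V(G)$ pulls back to a unary predicate on the leaves of $T$, i.e., to additional colours in a finite enlargement of the label set. Once these two points are settled, the finite blow-up of the label set has no effect on $d$ and the conclusion follows immediately from Theorem~\ref{thm:shrub-depth-interpretability}.
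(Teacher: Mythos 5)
Your proof is correct and takes essentially the same approach as the paper: compose the interpretation of $\cf C$ in bounded-height trees with the \PMSOi interpretation from Lemma~\ref{lem:interpret-localeq}, push the auxiliary labelling back to the leaves, and invoke Theorem~\ref{thm:shrub-depth-interpretability} to conclude. The only cosmetic difference is that the paper obtains the tree-to-$\cf C$ interpretation directly as the \FO interpretation implicit in Definition~\ref{def:tree-model} rather than by invoking Theorem~\ref{thm:shrub-depth-interpretability} a second time, but the composition argument and the treatment of labels are the same.
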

\longversion{\begin{proof}
Let $d$ be the least integer such that, for some $m$ as in
Definition~\ref{def:shrub-depth}, it is $\cf C\subseteq\TM dm$.
Let $I$ denote an \FO interpretation of $\cf C$ in
the class $\cf T_d$ of rooted labelled trees of height $d$
which naturally follows from Definition~\ref{def:tree-model},
and let $J$ be the simple \PMSOi interpretation from
Lemma~\ref{lem:interpret-localeq}.

For every $H\in\cf L(\cf C)$ there is a suitably labelled graph $G\in\cf C$ 
such that $H\simeq G^J$, and a tree $T\in\cf T_d$ such that $G\simeq T^I$.
As this $T$ can additionally inherit any suitable labelling of
$G$, we can claim $H\simeq (T^I)^J$.
Therefore, the composition $J\circ I$ is a \PMSOi interpretation of 
$\cf L(\cf C)$ in $\cf T_d$.
By Theorem~\ref{thm:shrub-depth-interpretability}, $\cf L(\cf C)$ is of
shrub-depth at most~$d$ and, at the same time, $\cf C\subseteq\cf L(\cf C)$.
\end{proof}
}

\begin{COR}
\label{cor:shrubd-monotone}
The shrub-depth parameter is monotone under taking vertex-minors over graph
classes.
\end{COR}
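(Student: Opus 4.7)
The plan is to reduce vertex-minor monotonicity to the combination of local-equivalence invariance (Theorem~\ref{thm:shrub-local-equal}) together with the fact, already observed after Definition~\ref{def:shrub-depth}, that each class $\TM dm$ is closed under induced subgraphs. The key auxiliary step I would first establish is an elementary commutation observation: for any graph $G$ and any two distinct vertices $u,v\in V(G)$, the graph $(G*v)\setminus u$ is isomorphic to $(G\setminus u)*v$. This holds because local complementation at $v$ only toggles edges among neighbours of $v$, while deletion of $u$ only removes edges incident to $u$, so the two operations act on disjoint sets of edge-slots (provided the pivot vertex $v$ is not the one being deleted). Iterating this, any sequence of local complementations and vertex deletions witnessing that $H$ is a vertex-minor of $G$ can be reordered so that all local complementations occur before any vertex deletion.

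With this reordering in hand, every vertex-minor $H$ of $G$ is realised as an induced subgraph of some graph $G'$ which is locally equivalent to $G$; equivalently, if $\cf C$ is a graph class and $\cf V(\cf C)$ denotes the class of all vertex-minors of members of $\cf C$, then every member of $\cf V(\cf C)$ is an induced subgraph of a member of $\cf L(\cf C)$.

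Now suppose $\cf C$ has shrub-depth at most $d$. By Theorem~\ref{thm:shrub-local-equal} the class $\cf L(\cf C)$ also has shrub-depth at most $d$, so by Definition~\ref{def:shrub-depth} there exists some $m$ with $\cf L(\cf C)\subseteq\TM dm$. Using the preceding paragraph, each $H\in\cf V(\cf C)$ is an induced subgraph of some $G'\in\cf L(\cf C)\subseteq\TM dm$, and since $\TM dm$ is closed under induced subgraphs we conclude $H\in\TM dm$. Thus $\cf V(\cf C)\subseteq\TM dm$, which gives shrub-depth at most $d$ for $\cf V(\cf C)$, completing the proof.

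The only delicate point in this outline is verifying the commutation identity $(G*v)\setminus u\cong(G\setminus u)*v$ and iterating it correctly; once that is in place the corollary follows immediately by combining Theorem~\ref{thm:shrub-local-equal} with the trivial closure of $\TM dm$ under induced subgraphs. I do not expect any genuine obstacle here, since all heavy machinery (notably the \PMSOi interpretation of local equivalence) has already been absorbed into Theorem~\ref{thm:shrub-local-equal}.
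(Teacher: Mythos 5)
Your proposal is correct and follows essentially the same route as the paper's proof: both reduce the corollary to Theorem~\ref{thm:shrub-local-equal} together with closure of $\TM dm$ under induced subgraphs, the only difference being that you spell out the (correct) commutation $(G*v)\setminus u\cong(G\setminus u)*v$, whereas the paper simply invokes ``by the definition'' that a vertex-minor is an induced subgraph of a locally equivalent graph.
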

\longversion{
\begin{proof}
By the definition, a vertex-minor is obtained as an induced subgraph of a
locally equivalent graph.
Since taking induced subgraphs does not change a tree-model, the claim
follows from Theorem~\ref{thm:shrub-local-equal}.
\end{proof}
}

\section{From small Tree-depth to small SC-depth}

We have just seen that taking vertex-minors does not increase the shrub-depth of
a graph class.
It is thus interesting to ask whether, perhaps, every class of bounded
shrub-depth could be constructed by taking vertex-minors of some special
graph class.
This indeed turns out to be true in a very natural way---the special classes in
consideration are the graphs of bounded tree-depth.

Before proceeding we need to introduce another ``depth'' parameter
asymptotically related to shrub-depth which, unlike the former,
is defined for any single graph.
Let $G$ be a graph and let $X\subseteq V(G)$.
We denote by $\overline{G}^X$ the graph $G'$ with vertex set $V(G)$ where
$x\neq y$ are adjacent in $G'$ if either 
\longversion{
\begin{enumerate}[(i)]
\item $\{x,y\}\in E(G)$ and $\{x,y\}\not\subseteq X$, or
\item $\{x,y\}\not\in E(G)$ and $\{x,y\}\subseteq X$.
\end{enumerate}
}
\shortversion{
(i) $\{x,y\}\in E(G)$ and $\{x,y\}\not\subseteq X$, or
(ii) $\{x,y\}\not\in E(G)$ and $\{x,y\}\subseteq X$.
}
In other words, $\overline{G}^X$ is the graph obtained from $G$ 
by complementing the edges on~$X$.

\begin{DEF}[SC-depth \cite{GHNOOR12}]
\label{def:SC-depth}
We define inductively the class $\SC k$ as follows:
\begin{enumerate}[i.]
  \item let $\SC0=\{K_1\}$;
  \item if $G_1,\dots,G_p\in\SC k$ and $H= G_1\dot\cup\dots\dot\cup G_p$
  denotes the disjoint union of the $G_i$,
  then for every subset $X$ of vertices of $H$ we
  have $\overline{H\,}^X\in\SC{k+1}$.
\end{enumerate}
The {\em SC-depth} of $G$ is the minimum integer $k$ such that $G\in\SC k$.
\end{DEF}


\longversion{
\begin{PROP}[\cite{GHNOOR12}]
\label{thm:shrubsc}
The following are equivalent for any class of graphs $\cf G$:
\begin{itemize}
  \item there exist integers $d$, $m$ such that $\cf G\subseteq \TM dm$;
  \item there exists an integer $k$ such that $\cf G\subseteq \SC k$.
\end{itemize}
\end{PROP}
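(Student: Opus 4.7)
The plan is to prove both implications. The backward direction admits a direct translation of SC-construction trees into tree-models, while the forward direction needs an induction on tree-model depth together with a careful cancellation lemma.

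For $\SC{k}\Rightarrow\TM{d}{m}$: Given $G\in\SC k$, its SC-construction yields a rooted tree $T$ of height $k$ whose leaves are $V(G)$ and whose internal nodes each carry the subset $X$ used at that SC-step. I assign to each leaf $v$ the bit vector $(b_1(v),\dots,b_k(v))$ with $b_j(v)=1$ iff $v$ belongs to the $X$-set of its $j$-th ancestor; this uses at most $2^k$ colors. A short induction on the SC-construction shows that for two leaves $u,v$ whose LCA in $T$ sits at height $\ell$, the adjacency in $G$ equals $\bigoplus_{j=\ell}^{k}\bigl(b_j(u)\wedge b_j(v)\bigr)$, which depends only on the pair of colors and on $\ell$, equivalently on the distance between $u$ and $v$ in $T$. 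Hence $T$ is a valid tree-model with at most $2^k$ colors, so $G\in\TM{k}{2^k}$.

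For $\TM{d}{m}\Rightarrow\SC{k}$: I proceed by induction on $d$, showing $\TM{d}{m}\subseteq\SC{f(d,m)}$ for a function $f$ with $f(0,m)=0$ and $f(d,m)\leq f(d-1,m)+\binom{m}{2}+m+1$. The base case $\TM{0}{m}=\{K_1\}\subseteq\SC{0}$ is immediate. For the inductive step, fix $G\in\TM{d}{m}$ with tree-model $T$, let $G_1,\dots,G_p$ be the subtree-graphs at the children of the root of $T$, and let $\tau:[m]\times[m]\to\{0,1\}$ be the top-level (distance-$2d$) signature of $T$. The key move is to modify each $G_i$ into a graph $G_i'$ by toggling every within-$G_i$ pair $(u,v)$ for which $\tau(c(u),c(v))=1$. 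This merely shifts every distance-signature of $G_i$ by the constant $\tau$, so $G_i'$ still has a tree-model of depth $d-1$ with $m$ colors; hence by the induction hypothesis $G_i'\in\SC{f(d-1,m)}$. Now set $H=G_1'\,\dot\cup\,\cdots\,\dot\cup\,G_p'\in\SC{f(d-1,m)+1}$ and apply, in sequence, the following global SC-operations: one complementation with $X=V_a\cup V_b$ for each unordered pair $a\neq b$ with $\tau(a,b)=1$, plus one complementation with $X=V_a$ for each $a$ for which the parity $\tau(a,a)\oplus\bigoplus_{b\neq a}\tau(a,b)$ equals one.

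The main obstacle is verifying that this sequence transforms $H$ into $G$. The core observation is that for any $X$ of the above form, the pair-set $\binom{X}{2}$ restricted to $\binom{V(G_i)}{2}$ coincides with $\binom{X\cap V(G_i)}{2}$; consequently, the within-$G_i$ effect of all the global complementations equals exactly the within-$G_i$ modification that defined $G_i'$, and the two cancel, restoring the edges of $G_i$. For a pair $(u,v)$ in different components with colors $a,b$, a short counting argument shows that the cumulative parity $\bigoplus_{X}\bigl[u,v\in X\bigr]$ over the chosen $X$'s equals $\tau(a,b)$: the contribution of the $V_a\cup V_b$ step handles the case $a\neq b$ directly, while for $a=b$ the spurious terms contributed by the $V_a\cup V_{b'}$ steps for $b'\neq a$ are absorbed by the choice of $V_a$-parity. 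Hence the added cross-component edges are precisely those prescribed by $\tau$, and $G$ is obtained as required, completing the induction and yielding the bound $f(d,m)=O(dm^{2})$.
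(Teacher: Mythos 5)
The paper states this proposition purely as a citation to \cite{GHNOOR12} and supplies no proof of its own, so there is no in-paper argument to compare against. Judged on its own terms, your proof is correct and essentially complete.

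In the direction $\SC{k}\Rightarrow\TM{k}{2^k}$: the bit-vector colouring is sound, and the adjacency formula (XOR over common ancestors of the AND of $X$-membership bits) does indeed depend only on the two colours and the level of the LCA, hence only on the colours and the tree-distance. Two minor points you should make explicit: (1) the SC-construction tree must be padded (by inserting trivial levels with $X=\emptyset$, using $\SC{j}\subseteq\SC{j+1}$) so that all leaves sit at depth exactly $k$, as Definition~\ref{def:tree-model} requires; (2) your index range $\bigoplus_{j=\ell}^{k}$ should be the set of common ancestors of $u$ and $v$, whose size is one more than the height of their LCA -- a cosmetic off-by-one depending on your indexing convention, not a substantive error. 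In the direction $\TM{d}{m}\Rightarrow\SC{f(d,m)}$: the pre-correction of each $G_i$ to $G_i'$ (shifting the sub-tree-model signature by the constant $\tau$) is legitimate and keeps $G_i'$ in $\TM{d-1}{m}$, and the key parity calculation is right. For a pair of colours $a\neq b$, only the single step $X=V_a\cup V_b$ contributes, giving parity $\tau(a,b)$; for $a=b$, the steps $V_a\cup V_{b'}$ contribute $\bigoplus_{b'\neq a}\tau(a,b')$ and the step $V_a$ is chosen precisely to cancel this and leave $\tau(a,a)$. Since the same parity applies within and across components, the within-$G_i$ toggles exactly undo the pre-correction and the cross-component toggles install $\tau$, recovering $G$. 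The resulting recursion $f(d,m)\leq f(d-1,m)+\binom{m}{2}+m+1$ gives the claimed $O(dm^2)$ bound. This is a clean reconstruction of the equivalence; as far as I can tell it matches the spirit of the original argument in \cite{GHNOOR12}, though I cannot verify line-by-line agreement without that text.
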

}
From Definition~\ref{def:SC-depth}, one can obtain the following claim:

\begin{LEM}\label{lem:vertex-minor-tree-depth}
	Let $k$ be a positive integer.
	If a graph $G$ has SC-depth at most $k$, then  
	$G$ is a vertex-minor of a graph of tree-depth at most $k+1$.
\end{LEM}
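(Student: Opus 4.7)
My plan is to prove the lemma by induction on $k$, strengthening the statement so that the witnessing vertex-minor operations can be composed without interference. Specifically, I will show that for every $G\in\SC k$ there is a graph $T$ with $\td(T)\le k+1$ and a sequence of vertex-minor operations taking $T$ to $G$ such that every local complementation is performed at a vertex that is subsequently deleted (equivalently, at a vertex outside $V(G)$). The base case $k=0$ is trivial since then $G=T=K_1$, and keeping this extra invariant will be crucial in the inductive step.

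For the inductive step, I write $G=\overline{H}^{X}$ with $H=G_1\dot\cup\cdots\dot\cup G_p$ and each $G_i\in\SC{k-1}$. By the strengthened induction hypothesis I obtain, for each $i$, a graph $T_i$ with $\td(T_i)\le k$ and an operation sequence $\sigma_i$ carrying $T_i$ to $G_i$ whose local complementations all occur at vertices in $V(T_i)\setminus V(G_i)$. I then build $T$ by taking the disjoint union of the $T_i$'s together with a single fresh apex vertex $r$ whose neighbourhood in $T$ is exactly $X\subseteq V(H)\subseteq\bigcup_i V(T_i)$. Placing $r$ as a new root above the optimal rooted forests for the $T_i$'s immediately witnesses $\td(T)\le k+1$. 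The sequence I apply to $T$ is the concatenation $\sigma_1,\ldots,\sigma_p$, followed by one local complementation at $r$, followed by the deletion of $r$.

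The key verification, and what I expect to be the only delicate point, is that the operations inside the $T_i$'s do not disturb the new apex $r$. By the strengthened induction hypothesis, every local complementation inside any $\sigma_i$ is at some $v\notin V(G_i)$; by disjointness of the $T_j$'s this forces $v\notin V(G)$, and hence $v\notin X=N_T(r)$. Thus $r\notin N_T(v)$, so local complementation at $v$ affects only edges inside $N_T(v)$ and leaves every edge incident to $r$ untouched. Consequently $N(r)$ stays equal to $X$ throughout the execution of $\sigma_1,\ldots,\sigma_p$, and afterwards the induced graph on $V(G)\cup\{r\}$ is precisely $H$ with $r$ still attached exactly to $X$.

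The final local complementation at $r$ then flips exactly the pairs inside $N(r)=X$, thereby turning $H$ into $\overline{H}^{X}$, and deleting $r$ leaves $G$. Because $r$ itself is deleted immediately after its LC, the strengthened invariant is preserved and the induction closes. The main obstacle that this plan is designed to overcome is precisely the interaction between the inner and outer local complementations: without the ``LC only at deleted vertices'' invariant, an LC at a vertex of $X$ inside some $\sigma_i$ would silently modify $N(r)$ and break the final step.
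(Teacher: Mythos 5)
Your proof is correct and follows essentially the same route as the paper's: the same induction on $k$ with the same strengthened invariant (local complementations only at vertices outside $V(G)$, which will later be deleted), the same construction of an apex vertex $r$ attached exactly to $X$ above the disjoint union of witnesses for the $G_i$, and the same final local complementation at $r$. The paper phrases the witness as a pair (a graph $U$ and a rooted forest $T$ with $U\subseteq\cl(T)$), while you fold this into a single graph of bounded tree-depth, but the substance is identical.
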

\longversion{
\begin{proof}
	For a graph $G$ of SC-depth $k$,
	we recursively construct a graph $U$ and a rooted forest $T$ such that
	\begin{enumerate}[i.]
	\item $G$ can be obtained from $U$ as a vertex-minor via applying local complementations
	only at the vertices in $V(U)\setminus V(G)$, and
	\item $U\subseteq \cl(T)$ and $T$ has depth $k$.
	\end{enumerate}
	If $k=0$, then it is clear by setting $G=U=T=K_1$. 
	We assume that $k\ge 1$.
			
	Since $G$ has SC-depth $k$, there exist a graph $H$ and $X\subseteq V(H)$ such that $G=\overline{H}^X$ 
	and $H$ is the disjoint union of $H_1, H_2, \ldots, H_m$ such that each $H_i$ has SC-depth $k-1$.
	By induction hypothesis, for each $1\le i\le m$, $H_i$ is a vertex-minor of a graph $U_i$ and $U_i\in \cl(T_i)$ where the
	height of $T_i$ is at most $k$.
	For each $1\le i\le m$, let $r_i$ be the root of $T_i$, and let $T$
	be the rooted forest obtained from the disjoint union of all $T_i$
	by adding a root $r$ which is adjacent to all $r_i$.
	Let $U$ be the graph obtained from the disjoint union of all $U_i$
	and $\{r\}$ by adding all edges from $r$ to $X$. 
	Validity of (ii.) is clear from the construction.
	
	Now we check the statement (i.). 
	By our construction of $U$, any local complementation in $U_i$ has no effect
	on $U_j$ for $j\not=i$, and local complementations at vertices
	in $V(U_i)\setminus V(H_i)$ do not change edges incident with $r$.
	Hence, by induction, we can obtain $H$ as a vertex-minor of $U$ and
	still have $r$ adjacent precisely to $X\subseteq V(H)$.
	We then apply the local complementation at $r\in V(U)\setminus
	V(H)$, and delete $V(U)\setminus V(G)$ to obtain $G$.
%
\end{proof}
}

\shortversion{
It has been shown in~\cite{GHNOOR12} that
a class of graphs $\cf G$ has SC-depth at most $k$ if and only if there
are integers $d$, $m$ such that $\cf G\subseteq \TM dm$. Using this characterisation
and the above lemma gives us the main conclusion of this section.
}

\longversion{
This, with Proposition~\ref{thm:shrubsc}, now immediately gives the main conclusion:
}
\begin{THM}
\label{thm:tdtoshrubd}
For any class $\cf S$ of bounded shrub-depth, there exists an integer $d$
such that every graph in $\cf S$ is a vertex-minor of a graph of tree-depth~$d$.
\qed\end{THM}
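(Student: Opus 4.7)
The plan is to observe that the theorem falls out by composing two ingredients already established in the section: the shrub-depth/SC-depth equivalence (Proposition~\ref{thm:shrubsc}) and the per-graph vertex-minor construction (Lemma~\ref{lem:vertex-minor-tree-depth}). Concretely, given a class $\cf S$ of bounded shrub-depth, I would first invoke Proposition~\ref{thm:shrubsc} to extract a single integer $k$ such that $\cf S\subseteq\SC k$, i.e.\ every $G\in\cf S$ has SC-depth at most $k$. This is the crucial uniformity step: what was originally a bound ``$\cf S\subseteq\TM dm$'' is re-expressed in the nested-complementation normal form, where each graph admits an explicit recursive decomposition suitable for the next step.

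Next, for each individual $G\in\cf S$ I would apply Lemma~\ref{lem:vertex-minor-tree-depth} to this bound on SC-depth, producing a host graph $U_G$ with $\td(U_G)\le k+1$ and a sequence of local complementations and vertex deletions turning $U_G$ into $G$. Setting $d:=k+1$ then witnesses the statement: every graph in $\cf S$ is a vertex-minor of some graph of tree-depth at most $d$. Note that $d$ depends only on the class, not on the individual graph, precisely because $k$ did.

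The main work, of course, is hidden in Lemma~\ref{lem:vertex-minor-tree-depth} itself, whose construction mirrors the recursive definition of SC-depth: at each level, a fresh ``root'' vertex $r$ is added and joined to the set $X$ that encodes the planned complementation, and a single local complementation at $r$ realises exactly that complementation on $X$ once $r$ is finally deleted. The inductive invariant ensures that local complementations performed at auxiliary vertices in deeper levels never disturb the edges incident with ancestors, which is what makes the scheme compose cleanly and keeps the supporting forest of height~$k+1$.

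Given that both cited results are available, there is no real obstacle remaining; the only thing to check is that the quantifier order comes out right, namely that the constant $k$ obtained from Proposition~\ref{thm:shrubsc} is class-uniform (it is, by the statement of that proposition), so that the constant $d=k+1$ delivered by the lemma is likewise class-uniform. The conclusion then follows immediately by a one-line composition.
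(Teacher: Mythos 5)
Your proof is correct and is precisely the paper's own argument: the text notes that Theorem~\ref{thm:tdtoshrubd} follows immediately by combining Proposition~\ref{thm:shrubsc} (class-uniform SC-depth bound $k$ from bounded shrub-depth) with Lemma~\ref{lem:vertex-minor-tree-depth} (SC-depth $k$ implies vertex-minor of a graph of tree-depth $k+1$), exactly as you compose them. The quantifier bookkeeping you flag is the right thing to check, and your $d = k+1$ matches.
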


	Comparing Theorem~\ref{thm:tdtoshrubd} with \cite{Kwon2013} one may naturally ask whether,
	perhaps, weaker pivot-minors could be sufficient in
	Theorem~\ref{thm:tdtoshrubd}.
	Unfortunately, that is very false from the beginning.
	Note that all complete graphs have SC-depth $1$. 
	On the other hand, we will prove (Proposition~\ref{prop:noclique})
	that graphs of bounded tree-depth cannot contain
	arbitrarily large cliques as pivot-minors.
\longversion{
	We need the following technical lemmas.

\begin{LEM}\label{lem:avoidingedge}
	Let $G$ be a graph and $X\subseteq V(G)$ such that $\mx A(G)[X]$ is non-singular
	and $\abs{X}\ge 3$.
	If $u\in X$, then there exist $v, w\in X\setminus \{u\}$
	such that $vw\in E(G)$.
\end{LEM}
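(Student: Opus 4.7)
The plan is to prove the contrapositive: assume that no edge exists inside $X\setminus\{u\}$, and show that $\mx A(G)[X]$ must then be singular, contradicting the hypothesis.

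Under this assumption, all entries of the principal submatrix $\mx M:=\mx A(G)[X]$ that are indexed by a pair in $(X\setminus\{u\})\times(X\setminus\{u\})$ vanish, because $X\setminus\{u\}$ is independent in $G$. The diagonal of $\mx M$ also vanishes, since $G$ is a simple graph. Hence every nonzero entry of $\mx M$ lies in the row or the column indexed by $u$.

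Let $v\in\{0,1\}^X$ be the indicator vector of $N_G(u)\cap(X\setminus\{u\})$ (with a $0$ in position $u$), and let $e_u\in\{0,1\}^X$ be the standard basis vector corresponding to $u$. Then one sees immediately that
\[
\mx M \;=\; e_u\, v^{T} + v\, e_u^{T},
\]
so the row space of $\mx M$ is contained in $\mathrm{span}(e_u,v)$, and in particular $\rank(\mx M)\le 2$. Since $|X|\ge 3$, the matrix $\mx M$ is a square matrix of order at least $3$ and rank at most $2$, hence singular. This contradicts the assumption that $\mx A(G)[X]$ is non-singular, finishing the proof.

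I expect no serious obstacle here; the only subtle point is noticing that the hypothesis $|X|\ge 3$ is exactly what forces the rank-$2$ obstruction to give singularity, and that the diagonal of an adjacency matrix being zero is essential to place the entire diagonal (and not just the $(u,u)$ entry) into the forbidden region. The argument works over any field used in the definition of matrix pivoting, so it applies regardless of whether one interprets $\mx A(G)$ over $GF(2)$ or over another field.
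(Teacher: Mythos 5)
Your proof is correct and follows essentially the same line as the paper's: assume $X\setminus\{u\}$ is independent, observe that all nonzero entries of $\mx A(G)[X]$ then lie in row or column $u$, and conclude singularity. The paper states this tersely ("$G[X]$ is a star with centre $u$, hence $\mx A(G)[X]$ is singular"), while you make the rank-$\le 2$ decomposition $e_u v^T + v e_u^T$ explicit, which is a cleaner justification of the same fact and also sidesteps the paper's slight imprecision that $G[X]$ is only a \emph{subgraph} of such a star.
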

\begin{proof}
	Let $u\in X$. Suppose that for every pair of distinct vertices 
	$v, w\in X\setminus \{u\}$, $vw\notin E(G)$.
	That means $G[X]$ is isomorphic to a star with the centre $u$.
	However, the matrix $\mx A(G)[X]$ is clearly singular, 
and it contradicts to the assumption.
\end{proof}

\begin{LEM}\label{lem:reorder}
	Let $G$ be a graph and let $X\subseteq V(G)$ such that
	$X\neq \emptyset$ and $\mx A(G)[X]$ is non-singular.
	Let $s\in X$.
	Then $G$ has a sequence of pairs of vertices $\{x_1,y_1\},
	\{x_2,y_2\},$ $\ldots, \{x_m, y_m\}$ such that 
	\begin{enumerate}[\quad a)]
	\item $\mx A(G)*X=\mx A(G\wedge x_1y_1\wedge x_2y_2 \cdots \wedge x_my_m)$,
	\item $(\{x_i, y_i\}: {1\le i\le m})$ is a partition of~$X$
		(in particular, $\abs X$ is even), and
	\item $s\in \{x_m,y_m\}$.
	\end{enumerate}
%
\end{LEM}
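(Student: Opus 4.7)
The plan is to induct on $|X|$. First observe that since the adjacency matrix has zero diagonal, $\mx A(G)[X]$ is singular whenever $|X|=1$; so the non-singularity hypothesis forces $|X|\ge 2$, and one sees along the way that $|X|$ is forced to be even.

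For the base case $|X|=2$, non-singularity of $\mx A(G)[X]$ forces $X=\{s,t\}$ with $st\in E(G)$. Taking $m=1$ and $\{x_1,y_1\}=X$ trivially satisfies~(b) and~(c), while~(a) is just the defining correspondence between edge-pivoting and matrix pivoting on a two-element set, namely $\mx A(G\wedge st)=\mx A(G)*\{s,t\}$.

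For $|X|\ge 3$, I would apply Lemma~\ref{lem:avoidingedge} with $u=s$ to obtain $v,w\in X\setminus\{s\}$ with $vw\in E(G)$, and then set $\{x_1,y_1\}:=\{v,w\}$ and $G':=G\wedge vw$, so that $\mx A(G')=\mx A(G)*\{v,w\}$. Theorem~\ref{thm:21}, applied with $S=\{v,w\}$ and $T=X\setminus\{v,w\}$ (so that $S\Delta T=X$), then certifies that $\mx A(G')[X\setminus\{v,w\}]$ is non-singular. Since $X\setminus\{v,w\}$ is strictly smaller than $X$ and still contains $s$, the induction hypothesis applied to $(G',\,X\setminus\{v,w\},\,s)$ yields pairs $\{x_2,y_2\},\dots,\{x_m,y_m\}$ partitioning $X\setminus\{v,w\}$ with $s\in\{x_m,y_m\}$ and $\mx A(G')*(X\setminus\{v,w\})=\mx A(G'\wedge x_2y_2\wedge\cdots\wedge x_my_m)$. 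Prepending $\{x_1,y_1\}$ gives a partition of $X$ that preserves~(b) and~(c), and~(a) follows by chaining this equality with $\mx A(G\wedge x_1y_1\wedge x_2y_2\wedge\cdots\wedge x_my_m)=\mx A(G'\wedge x_2y_2\wedge\cdots\wedge x_my_m)$ and invoking Theorem~\ref{thm:22} to rewrite $(\mx A(G)*\{v,w\})*(X\setminus\{v,w\})=\mx A(G)*X$.

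The two technical points requiring care are the following. First, one must ensure at every step that the principal submatrix being pivoted is non-singular, so that Theorem~\ref{thm:22} is legitimately applicable; this is handled uniformly by Theorem~\ref{thm:21} together with the identity $\{v,w\}\Delta(X\setminus\{v,w\})=X$. Second, one needs $s$ to end up in the very last pair, which is precisely why Lemma~\ref{lem:avoidingedge} is formulated to produce an edge avoiding a prescribed vertex: taking that prescribed vertex to be $s$ at every level of the recursion defers the pair containing $s$ to the final step, yielding~(c).
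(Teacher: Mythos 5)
Your proof is correct and follows essentially the same approach as the paper's: induction on $|X|$, peeling off an edge $\{x_1,y_1\}\subseteq X\setminus\{s\}$ via Lemma~\ref{lem:avoidingedge}, invoking Theorem~\ref{thm:21} to preserve non-singularity of the residual principal submatrix, and chaining via Theorem~\ref{thm:22} to recombine the pivots. The base case handling ($|X|=1$ impossible because of the zero diagonal, $|X|=2$ forcing an edge) and the role of $s$ as the prescribed avoided vertex likewise match the paper's argument.
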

\begin{proof}
	We prove the theorem by induction on $\abs{X}\geq1$.
	If $\abs X=1$, then $\mx A(G)[X]$ cannot be non-singular, as we have no loops in~$G$.
	If $X=\{x_1, x_2\}$, 
	then $x_1, x_2$ must form an edge of $G$ since, again, 
	$\mx A(G)[X]$ is non-singular. 
	Since $\mx A(G)*\{x_1, x_2\}=\mx A(G\wedge x_1x_2)$, 
	and either $s=x_1$ or $s=x_2$, we conclude the claim.

	For an inductive step, we assume that $\abs{X}\ge3$.
	Since $\mx A(G)[X]$ is non-singular, by Lemma~\ref{lem:avoidingedge}, 
	there exist two vertices $x_1, y_1\in X\setminus \{s\}$ such that
	$x_1y_1\in E(G)$.
	Also, by Theorem~\ref{thm:21}, 
	$\mx A(G\wedge x_1y_1)[X\setminus \{x_1, y_1\}]$ is non-singular. 
	By Theorem~\ref{thm:22}, we have
	\begin{align*}
	\mx A(G)*X
	&=\mx A(G)*(\{x_1,y_1\}\Delta (X\setminus \{x_1, y_1\}) \\
	&=(\mx A(G)*\{x_1,y_1\})*(X\setminus \{x_1, y_1\}) \\
	&=\mx A(G\wedge x_1y_1)*(X\setminus \{x_1, y_1\}).
	 \end{align*}
	
	 Since $s\in X\setminus \{x_1, y_1\}\not=\emptyset$, by the induction hypothesis, 
	 $G\wedge x_1y_1$ has a sequence of pairs of vertices $\{x_2,y_2\}, \ldots, \{x_m,y_m\}$ such that 
	 \begin{enumerate}[a)]
	\item $\mx A(G\wedge x_1y_1)*(X\setminus \{x_1, y_1\})=\mx A((G\wedge x_1y_1)\wedge x_2y_2 \cdots \wedge x_my_m)$,
	\item $(\{x_i, y_i\}: {2\le i\le m})$ is a partition of~$X\setminus\{x_1, y_1\}$, and
	\item $s\in \{x_m,y_m\}$.
	\end{enumerate}
        \sloppypar
	Thus, $\mx A(G)*X=\mx A(G\wedge x_1y_1\wedge x_2y_2 \cdots \wedge x_my_m)$ 
	and we can easily verify that
	$\{x_1, y_1\}, \{x_2, y_2\}, \ldots, \{x_m, y_m\}$ is the desired sequence.
\end{proof}

	Now we are ready to prove the promised negative proposition.

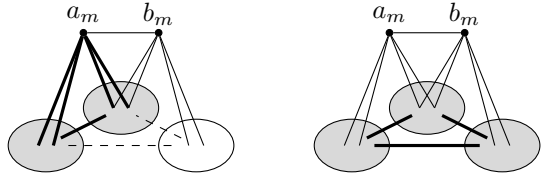
\begin{figure}[t]\centering
\tikzstyle{v}=[circle, draw, solid, fill=black, inner sep=0pt, minimum width=2.5pt]
\tikzset{photon/.style={decorate, decoration={snake}}}
\begin{tikzpicture}[scale=0.05]

\node [v] (a) at (10,50) {};
\node [v] (b) at (30,50) {};

\draw[fill=gray!30] (0,20) ellipse (10 and 7);
\draw (40,20) ellipse (10 and 7);
\draw[fill=gray!30] (20,30) ellipse (10 and 7);

\draw[very thick] (a) -- (-2, 20);
\draw[very thick] (a) -- (2, 20);
\draw (b) -- (42, 20);
\draw (b) -- (38, 20);

\draw[very thick] (a) -- (22, 30);
\draw[very thick] (a) -- (18, 30);
\draw (b) -- (22, 30);
\draw (b) -- (18, 30);

\draw (a) -- (b);

\draw[very thick] (4, 22) -- (16, 28);
\draw[dashed] (36, 22) -- (24, 28);
\draw[dashed] (6, 20) -- (34, 20);

\draw (10,55) node{$a_m$};
\draw (30,55) node{$b_m$};

\end{tikzpicture}\qquad\quad
\begin{tikzpicture}[scale=0.05]

\node [v] (a) at (10,50) {};
\node [v] (b) at (30,50) {};

\draw[fill=gray!30] (0,20) ellipse (10 and 7);
\draw[fill=gray!30] (40,20) ellipse (10 and 7);
\draw[fill=gray!30] (20,30) ellipse (10 and 7);

\draw (a) -- (-2, 20);
\draw (a) -- (2, 20);
\draw (b) -- (42, 20);
\draw (b) -- (38, 20);

\draw (a) -- (22, 30);
\draw (a) -- (18, 30);
\draw (b) -- (22, 30);
\draw (b) -- (18, 30);

\draw (a) -- (b);

\draw[very thick] (4, 22) -- (16, 28);
\draw[very thick] (36, 22) -- (24, 28);
\draw[very thick] (6, 20) -- (34, 20);

\draw (10,55) node{$a_m$};
\draw (30,55) node{$b_m$};

\end{tikzpicture}\caption{Two cases of a new clique obtained from $G'$ by pivoting the edge $a_mb_m$ in Proposition~\ref{prop:noclique} where $r\in \{a_m, b_m\}$. By induction hypothesis, the each coloured part can have a clique of size at most $3^{d-2}$ in $G'\setminus r$, and therefore the size of a new clique cannot exceed $3^{d-1}$.}\label{fig:noclique}
\end{figure}
}
\begin{PROP}\label{prop:noclique}
	Let $d,t$ be positive integers such that $t> 3^{d-1}$.
	Then a graph of tree-depth at most $d$ cannot contain a pivot-minor isomorphic 
	to the clique $K_t$.  
\end{PROP}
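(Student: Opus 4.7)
The plan is to proceed by induction on $d \geq 1$. Combining Proposition~\ref{prop:pivots-minors-matrices} with Lemma~\ref{lem:reorder}, the existence of a $K_t$ pivot-minor of $G$ amounts to sets $X, Y \subseteq V(G)$ with $\mx A(G)[X]$ non-singular, realisable as a sequence of edge-pivots $G \wedge x_1 y_1 \wedge \cdots \wedge x_m y_m$ whose restriction to $Y$ yields $K_t$; moreover one may dictate which pair ends the sequence. The base case $d = 1$ is immediate, since a graph of tree-depth $1$ is edgeless and all its pivot-minors are therefore edgeless, whereas $t > 3^0 = 1$ forces $t \geq 2$.

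For the inductive step, we first reduce to the connected case: the adjacency matrix of a disconnected graph is block-diagonal, so is its pivot, and any $K_t$ pivot-minor (connected for $t \geq 2$) therefore lives inside one connected component. Thus assume $G$ is connected, witnessed by a rooted tree $T$ with root $r$, and set $G_0 = G - r$, which has tree-depth at most $d - 1$ via $T - r$. The crucial algebraic ingredient is that pivoting on a set $X \subseteq V(G_0)$ commutes with removing $r$: inspection of the four-block formula shows $(\mx A(G) * X)[V(G_0)] = \mx A(G_0) * X$. Consequently, any induced clique of $\mx A(G) * X$ disjoint from $\{r\}$ is a clique of a pivot-minor of $G_0$ and has size at most $3^{d-2}$ by the induction hypothesis.

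If $r \notin X$, the commutation observation applied to $Y \setminus \{r\}$ gives a $K_{t-1}$ (or $K_t$) pivot-minor of $G_0$, whence $t \leq 3^{d-2} + 1 \leq 3^{d-1}$, contradicting $t > 3^{d-1}$. If $r \in X$, we invoke Lemma~\ref{lem:reorder} with $s = r$ to write the pivot sequence with $\{x_m, y_m\} = \{r, b_m\}$. Setting $G' = G \wedge x_1 y_1 \wedge \cdots \wedge x_{m-1} y_{m-1}$ and $G'' = G' \wedge r b_m$, the first $m-1$ pivots avoid $r$, so by the commutation observation $G' \setminus r$ is itself a pivot-minor of $G_0$. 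The last pivot $\wedge r b_m$ toggles precisely the edges whose endpoints lie in different classes among $N_1, N_2, N_3$, where $N_i$ partitions $V(G') \setminus \{r, b_m\}$ by adjacency with $\{r, b_m\}$ in $G'$. Writing $Y_i = Y \cap N_i$ and distinguishing whether $r, b_m \in Y$ or not, one verifies that at most three of the sets $Y_i$ can be non-empty and that each of them---possibly together with $b_m$ or $Y_0$---forms a clique in $G' \setminus r$. Applying the induction hypothesis to each such clique bounds it by $3^{d-2}$, and summing the at-most-three contributions (plus the small corrections from $r, b_m \in Y$) yields $t \leq 3 \cdot 3^{d-2} = 3^{d-1}$, the desired contradiction.

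The main obstacle is this final step: one must run through each of the four sub-cases determined by whether $r, b_m \in Y$ and check, via the $N_0, N_1, N_2, N_3$-partition analysis, that the three ``coloured regions'' of Figure~\ref{fig:noclique} really do form cliques inside $G' \setminus r$, and that the additive contributions from $r$ and $b_m$ never push the total above $3^{d-1}$ (the relevant arithmetic being $1 + 2 \cdot 3^{d-2} \leq 3^{d-1}$ and $3^{d-2} + 2 \leq 3^{d-1}$, both valid for $d \geq 2$).
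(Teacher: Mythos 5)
Your proof follows essentially the same route as the paper's: induction on $d$ via the quantity $K(d)$, reduction to a connected graph with tree root $r$, Lemma~\ref{lem:reorder} with $s=r$ to place the last pivot at $r$, and a final counting argument splitting the clique across the neighbourhood classes of the last pivoted pair. Your explicit justification that pivoting on $X\subseteq V(G)\setminus\{r\}$ commutes with deleting $r$ (via the four-block formula) is a worthwhile detail the paper leaves implicit. One small inaccuracy: the claim that ``at most three of the sets $Y_i$ can be non-empty'' is false in the sub-case $r,b_m\notin Y$ (all of $Y_0,\dots,Y_3$ may be inhabited); the correct statement, which your parenthetical already hints at, is that $Y$ partitions into at most three \emph{groups} each forming a clique of $G'\setminus r$ (e.g.\ $Y_0\cup Y_1$, $Y_2$, $Y_3$, using that edges within $N_i$ and between $N_0$ and $N_i$ are unaffected by the pivot), which again yields $t\le 3\cdot 3^{d-2}$. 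The paper's own phrasing is equally terse at this point, so this is a wording slip rather than a gap in the argument.
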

\longversion{
\begin{proof}
	Let $K(d)=\max\,\{q:\td(G)\le d \text{ and $G$ has a pivot-minor isomorphic to $K_q$}\}$. 
	The statement is equivalent to $K(d)\le 3^{d-1}$.
	If $d=1$, then each component of a graph of tree-depth $1$ has one vertex and we have $K(1)=1$.
	We assume $d\ge 2$.

	We choose minimal $d$ such that a graph $G$ of tree-depth at most $d$ 
	has a pivot-minor isomorphic to $K_t$ where $t> 3^{d-1}$.
	Let $T$ be a tree-depth decomposition for $G$ of height at most $d$.
	Since $G$ is without loss of generality connected, $T$ has a unique
	root $r$ which is a vertex of $G$, too.
	Since $G$ has a pivot-minor isomorphic to $K_t$, 
	there exists $X\subseteq V(G)$ and $S\subseteq V(G)$ such that 
	\begin{enumerate}[a)]
	\item $\mx A(G)[X]$ is non-singular, and 
	\item the graph whose adjacency matrix is $(\mx A(G)*X)[S]$
	is isomorphic to $K_t$.
	\end{enumerate}
	By Lemma~\ref{lem:reorder}, for $s=r$ if $r\in X$ or $s\in X$ chosen
	arbitrarily otherwise, there exists a sequence of pairs of vertices
	$\{a_1, b_1\}, \{a_2, b_2\}, \ldots, \{a_m, b_m\}$ in $G$
	such that $\mx A(G)*X=\mx A(G\wedge a_1b_1\wedge a_2b_2 \cdots \wedge a_mb_m)$ and 
	$r\notin \{a_i,b_i\}$ for $1\le i\le m-1$.

	Let $G'=G\wedge a_1b_1\wedge a_2b_2 \cdots \wedge a_{m-1}b_{m-1}$.
	Then $(G'\wedge a_mb_m)[S]$ is isomorphic to $K_t$, and
	there are two cases:
	\begin{enumerate}[i.]
	\item
	$r\not\in\{a_m,b_m\}$, which means that $G\setminus r$ has the
	pivot-minor $(G'\wedge a_mb_m)\setminus r$ containing a
	$K_{t-1}$-subgraph.
	Since the tree-depth of $G\setminus r$ is $t-1$ as witnessed by
	the decomposition $T\setminus r$, and $t-1\geq3^{d-1}>3^{d-1-1}$, this
	contradicts our minimal choice of~$d$.
	\item
	$r=a_m$, up to symmetry.
	After the pivot $a_mb_m$, a new clique $K$ in $G$ (which is not present in $G'$) 
	is created in two possible ways:
	$K$ belongs to the closed neighbourhood of one of $a_m,b_m$,
	or $K$ is formed in the union of the neighbourhoods of $a_m,b_m$
	(excluding $a_m,b_m$).
	See Figure~\ref{fig:noclique}.
	In either case, $K$ is formed on two or three, respectively,
	cliques of $G'\setminus\{a_m,b_m\}$.
	Again, by minimality of $d$, the largest clique contained
        in $G'\setminus r$ can be of size ${3^{d-1-1}}$.
	Therefore, $t\leq\max\big(1+2\cdot3^{d-2},
		3\cdot3^{d-2}\big)=3^{d-1}$, a contradiction.
	\end{enumerate}
	Indeed, $t=K(d)\le 3^{d-1}$ as desired.
\end{proof}
}

\shortversion{
Hence, in contrast to rank-width~\cite{Kwon2013}, pivot-minors do not suffice to obtain all graphs of bounded
shrub-depth from graphs of bounded tree-depth. This raises the
question whether there is some simple condition on the graph class in
question which would guarantee us the theorem to hold for pivot-minors. It
turns out that one such simple restriction is to consider just bipartite
graphs of bounded shrub-depth, as stated by Theorem~\ref{thm:tdtoshrubd-bip}.


\begin{THM}
  \label{thm:tdtoshrubd-bip}
  For any class $\cf S$ of bounded shrub-depth, there exists an integer $d$
  such that every {\em bipartite} graph in $\cf S$ 
  is a pivot-minor of a graph of tree-depth~$d$.
\end{THM}
}

\longversion{

\section{Bipartite Graphs of small BSC-depth}

In the previous section we have seen that every graph class of bounded
shrub-depth can be obtained via vertex-minors of graphs of tree-depth $d$
for some $d$. Moreover, we have also proved that this statement does not
hold if we replace vertex-minors with pivot-minors. However this raises a
question whether there is some simple condition on the graph class in
question which would guarantee us the theorem to hold for pivot-minors. It
turns out that one such simple restriction is to consider just bipartite
graphs of bounded shrub-depth, as stated by Theorem~\ref{thm:tdtoshrubd-bip}.

To get our result, we introduce the following ``depth'' definition 
better suited to the pivot-minor operation, which builds upon the 
idea of SC-depth.
Let $G$ be a graph and let $X,Y\subseteq V(G)$, $X\cap Y=\emptyset$.
We denote by $\overline{G}^{(X,Y)}$ the graph $G'$ with vertex set $V(G)$ 
and edge set $E(G')=E(G)\Delta \{xy:x\in X, y\in Y\}$.
In other words, $\overline{G}^{(X,Y)}$ is the graph obtained from $G$ 
by complementing the edges between~$X$ and~$Y$.

\begin{DEF}[BSC-depth]
\label{def:BSC-depth}
We define inductively the class $\BSC k$ as follows:
\begin{enumerate}[i.]
  \item let $\BSC0=\{K_1\}$;
  \item if $G_1,\dots,G_p\in\BSC k$ and $H= G_1\dot\cup\dots\dot\cup G_p$,
  then for every pair of disjoint subsets $X,Y\subseteq V(H)$ we
  have $\overline{H\,}^{(X,Y)}\in\BSC{k+1}$.
\end{enumerate}
The {\em BSC-depth} of $G$ is the minimum integer $k$ such that $G\in\BSC k$.
\end{DEF}

\longversion{
In general, graphs of bounded SC-depth may have arbitrarily large
BSC-depth, but the two notions are anyway closely related,
as in Lemma~\ref{lem:BSC-SC-relation}.
Here $\chi(G)$ denotes the chromatic number of a graph.
\begin{LEM}\label{lem:BSC-SC-relation}
	\begin{enumerate}[\quad a)]
	\item The BSC-depth of any graph $G$ is at least $\lceil\log_2\chi(G)\rceil$.
	\label{it:logchi}
	\item The SC-depth of $G$ is not larger than three times its BSC-depth.
	\item If $G$ is bipartite, then the BSC-depth of $G$ is not larger 
		than its SC-depth.
	\label{it:bipSC}
	\end{enumerate}
\end{LEM}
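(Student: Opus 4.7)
For part~(\ref{it:logchi}), the plan is to induct on the BSC-depth $k$ and show $\chi(G)\le 2^k$. The base case $G=K_1$ is immediate. For the inductive step, write $G=\overline{H}^{(X,Y)}$ with $H=G_1\dot\cup\cdots\dot\cup G_p$ and each $G_i\in\BSC{k-1}$, and take a proper $2^{k-1}$-colouring $c$ of $H$, which exists by the inductive hypothesis applied componentwise. Define $c'(v)=c(v)$ if $v\notin Y$ and $c'(v)=c(v)+2^{k-1}$ if $v\in Y$. A short case analysis on whether an edge $uv$ of $G$ lies inside $X$, inside $Y$, between $X$ and $Y$, or has at least one endpoint outside $X\cup Y$ shows that $c'(u)\ne c'(v)$, giving $\chi(G)\le 2^k$.

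For part~(b), I would again induct on the BSC-depth, and the key observation is that one BSC operation can be simulated by three nested SC operations. Given $G=\overline{H}^{(X,Y)}\in\BSC k$ with each component of $H$ in $\BSC{k-1}$ (hence in $\SC{3(k-1)}$ by the inductive hypothesis), I apply to $H$ successively SC at $X\cup Y$, SC at $X$, and SC at $Y$; each step raises the SC-depth by at most one. To verify that the result equals $G$, I count, for each pair $u,v$, how many of the three operations flip the edge $uv$: the count is $2$ if both endpoints lie in $X$ or both in $Y$, $1$ if exactly one lies in $X$ and one in $Y$, and $0$ otherwise. Hence only the edges between $X$ and $Y$ are flipped, matching the desired BSC operation, so $G\in\SC{3k}$.

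Part~(c) is the main obstacle. I would deduce it from the following auxiliary claim, which I would prove by induction on $k$: for every graph $H\in\SC k$ and every vertex partition $V(H)=A\cup B$ (not required to be a graph-theoretic bipartition of $H$), the subgraph $H^*$ whose edges are precisely those of $H$ with one endpoint in $A$ and one in $B$ lies in $\BSC k$. Part~(\ref{it:bipSC}) then follows immediately by applying the claim to a bipartite $G$ with its own bipartition $(A,B)$, since in that case $G^*=G$. For the inductive step of the claim, write $H=\overline{H_0}^X$ with $H_0=G_1\dot\cup\cdots\dot\cup G_p$ and each $G_i\in\SC{k-1}$, set $A_i=A\cap V(G_i)$ and $B_i=B\cap V(G_i)$, and apply the inductive hypothesis to each $G_i$ to obtain $G_i^*\in\BSC{k-1}$. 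A direct edge-by-edge comparison then shows $H^*=\overline{H_0^*}^{(X\cap A,\,X\cap B)}$ where $H_0^*$ denotes the disjoint union of the $G_i^*$, so a single BSC operation certifies $H^*\in\BSC k$.
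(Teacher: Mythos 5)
Your proof is correct and follows essentially the same approach as the paper for all three parts: (a) assigns fresh colours to the vertices of $Y$ to bound $\chi$ by $2^k$; (b) simulates one BSC operation by three SC operations on $X$, $Y$, and $X\cup Y$ (the paper writes $\overline{H\,}^{(X,Y)}=\overline{(\overline{(\overline{H\,}^X)}^Y)}^{X\cup Y}$, which is the same parity computation in a different order); and (c) replaces each SC set $X$ in the decomposition by the pair $(X\cap A,X\cap B)$. The only cosmetic difference is in (c), where you cast the paper's informal ``replace $X$ by $(X\cap A,X\cap B)$ at every step'' as an explicit induction on an auxiliary claim about arbitrary vertex partitions, which makes the verification cleaner but is not a different idea.
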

\begin{proof}
a) If $H'=\overline{H\,}^{(X,Y)}$, then
$\chi(H')\leq2\chi(H)$ since one may use a fresh set of colours for the
vertices in~$Y$.
Then the claim follows by induction from Definition~\ref{def:BSC-depth}.

b) We have
$$
\overline{H\,}^{(X,Y)} = \overline{\left(
	\overline{\left(\overline{H\,}^X\right)}^Y
\right)}^{X\cup Y}
$$
and so the claim directly follows by comparing
Definitions~\ref{def:BSC-depth} and ~\ref{def:SC-depth}.

c) Let $G\in \SC k$.
Let $V(G)=A\cup B$ be a bipartition of $G$, i.e.,
that $A$ and $B$ are disjoint independent sets.
We use here for $G$ the same ``decomposition'' as in
Definition~\ref{def:SC-depth}; just replacing at every step
a single set $X$ with the pair $(X\cap A,X\cap B)$ (point ii. of the
definitions).
The resulting graph $G'\in\BSC k$ then fulfils the following:
both $A,B$ are independent sets in $G'$,
and every $uv\in A\times B$ is an edge in $G'$ if and only if
$uv$ is an edge of $G$.
Therefore, $G=G'\in\BSC k$.
\end{proof}

In particular, following Lemma~\ref{lem:BSC-SC-relation}\,\ref{it:logchi}),
the BSC-depth of the clique $K_n$ equals $\lceil\log_2n\rceil$,
while $K_{m,n}$ always have BSC-depth $1$.

\begin{LEM}\label{lem:BSC-from-pivot}
  Let $k$ be a positive integer.
  If a graph $G$ is of BSC-depth at most $k$, then
  $G$ is a pivot-minor of a graph of tree-depth at most $2k+1$.
\end{LEM}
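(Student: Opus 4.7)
The plan is to mirror the inductive construction of Lemma~\ref{lem:vertex-minor-tree-depth}, with one essential change: since a pivot $\wedge uv$ toggles edges \emph{between} two sides of the partition $(V_1,V_2,V_3)$ of the closed neighbourhood of $\{u,v\}$ rather than \emph{within} one side as local complementation does, the single new ``root'' vertex used in that lemma must be replaced by a pair $u_X,u_Y$ of new vertices joined by the edge to be pivoted. This replacement is precisely what doubles the per-level tree-depth cost from $k+1$ to $2k+1$.

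I proceed by induction on $k$; the base $k=0$ is immediate with $U=T=K_1$. For the step, write $G=\overline{H\,}^{(X,Y)}$ where $H=H_1\dot\cup\cdots\dot\cup H_p$ and each $H_i$ has BSC-depth at most $k-1$. By induction, each $H_i$ is a pivot-minor of some $U_i\subseteq\cl(T_i)$ with $T_i$ of height at most $2(k-1)$. I take $U$ to be the disjoint union of the $U_i$'s together with two fresh vertices $u_X,u_Y$ joined by an edge, in which additionally $u_X$ is joined to every vertex of $X$ and $u_Y$ to every vertex of $Y$. Root $T$ at $u_X$ with unique child $u_Y$ whose children are the roots of $T_1,\ldots,T_p$; then $T$ has height $2k$ and $U\subseteq\cl(T)$, giving $\td(U)\le 2k+1$.

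To recover $G$ as a pivot-minor of $U$, first apply, inside each $U_i$ separately, the pivot sequence guaranteed by the induction hypothesis which turns $U_i$ into a graph whose induced subgraph on $V(H_i)$ is $H_i$. Since in $U$ the vertices $u_X$ and $u_Y$ are adjacent only to one another and to vertices of $X\cup Y\subseteq V(H)$, and since the pivots are on edges of some $U_i$, none of them toggles any edge incident to $u_X$ or $u_Y$. Thus after this phase the induced subgraph on $V(H)$ is $H$ and $u_X,u_Y$ retain their original neighbourhoods. Now pivot the edge $u_Xu_Y$: since $X\cap Y=\emptyset$, the partition of $(N(u_X)\cup N(u_Y))\setminus\{u_X,u_Y\}$ is $(X,Y,\emptyset)$, so this pivot toggles precisely the edges between $X$ and $Y$ and leaves every other edge among $V(H)$ unchanged. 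Hence the induced subgraph on $V(H)$ becomes $\overline{H\,}^{(X,Y)}=G$, and deleting $V(U)\setminus V(G)$ yields $G$.

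The only real obstacle lies in the two ``non-interference'' claims: that the pivots carried out inside each $U_i$ do not disturb the edges incident to $u_X$ or $u_Y$, and that the final pivot on $u_Xu_Y$ affects no edge inside $V(H)$ other than those between $X$ and $Y$. Both reduce to the simple neighbourhood observation above, entirely parallel to the corresponding verification in Lemma~\ref{lem:vertex-minor-tree-depth}; degenerate cases such as $X=\emptyset$ or $Y=\emptyset$ require no separate argument since the pivot on $u_Xu_Y$ then does nothing inside $V(H)$, as expected.
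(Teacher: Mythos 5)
Your construction is identical to the paper's: the same choice of $U$ (disjoint union of the $U_i$ plus two fresh adjacent vertices, one attached to $X$, the other to $Y$), the same two-level extension of the tree, and the same final pivot on the new edge using the partition $(X,Y,\emptyset)$. You also get the height bookkeeping cleanly: height $2(k-1)$ for the $T_i$ gives height $2k$ for $T$, hence $\td(U)\le 2k+1$.

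There is, however, a gap in the induction hypothesis you state, and it affects exactly the non-interference step you identify as ``the only real obstacle.'' You assume only that $H_i$ is a pivot-minor of $U_i$; from this alone it does not follow that the pivot edges used can be chosen to avoid $V(H_i)$. If a pivot is performed on an edge $\{a,b\}$ of $U_i$ with, say, $a\in X\cap V(H_i)$, then $u_X$ is adjacent to $a$, so $u_X$ lies in the relevant neighbourhood partition and the pivot can very well toggle edges incident to $u_X$ --- contrary to what you claim when you write ``since the pivots are on edges of some $U_i$, none of them toggles any edge incident to $u_X$ or $u_Y$.'' What you need to carry through the induction is the strengthened statement, exactly as in Lemma~\ref{lem:vertex-minor-tree-depth} and as the paper does here: $G$ is obtained from $U$ by pivoting edges whose \emph{both} endpoints lie in $V(U)\setminus V(G)$, followed by vertex deletions. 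Your construction does maintain this strengthening (the final pivot is on $u_Xu_Y$, both new vertices), so the fix is only a matter of stating the right inductive invariant; but as written the argument is not self-contained.
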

\begin{proof}
  The proof follows along the same line as the proof of
  Lemma~\ref{lem:vertex-minor-tree-depth}.
  For a graph $G$ of BSC-depth $k$,
  we recursively construct a graph $U$ and a rooted forest $T$ such that
  \begin{enumerate}[i.]
  \item $G$ can be obtained from $U$ as a pivot-minor via pivoting edges
    only between vertices in $V(U)\setminus V(G)$, and
  \item $U\subseteq \cl(T)$ and $T$ has depth at most $2k+1$.
  \end{enumerate}
  If $k=0$, then it is clear by setting $G=U=T=K_1$. 
  We assume that $k\ge 1$.
  
  Since $G$ has BSC-depth $k$, there exist a graph $H$ and disjoint
  subsets $X,Y\subseteq V(H)$ such that $G=\overline{H}^{(X,Y)}$
  and $H$ is the disjoint union of $H_1, H_2, \ldots, H_m$ such that each $H_i$ has BSC-depth $k-1$.
  By induction hypothesis, for each $1\le i\le m$, $H_i$ is a pivot-minor of a graph $U_i$ and $U_i\in \cl(T_i)$ where the
  height of $T_i$ is at most $2(k-1)+1$.
  For each $1\le i\le m$, let $r_i$ be the root of $T_i$, and let $T$
  be the rooted forest obtained from the disjoint union of all $T_i$
  by adding an edge between two new vertices $r_x$ and $r_y$ and by
  connecting $r_Y$ to all $r_i$.
  Let $U$ be the graph obtained from the disjoint union of all $U_i$
  and the vertices $\{r_x,r_y\}$ by adding an edge between $r_x$ and $r_y$ and
  all edges from $r_x$ to $X$ as well as all edges from $r_y$ to $Y$. 
  Validity of (ii.) is clear from the construction.
  
  Now we check the statement (i.). 
  By our construction of $U$, any pivoting on edges in $U_i$ has no effect
  on $U_j$ for $j\not=i$, and pivoting on edges
  in $V(U_i)\setminus V(H_i)$ does not change edges incident with
  $r_x$ or $r_y$.
  Hence, by induction, we can obtain $H$ as a pivot-minor of $U$ and
  still have $r_x$ adjacent precisely to $r_y$ and $X\subseteq V(H)$ and $r_y$ 
  adjacent to $r_x$ and $Y \subseteq V(H)$.
  We then pivot the edge $\{r_x,r_y\}\in V(U)\setminus
  V(H)$, and delete $V(U)\setminus V(G)$ to obtain $G$.
\end{proof}
}

\shortversion{Because for bipartite graphs the BSC-depth is bounded by
the SC-depth of a graph, we obtain the main result of this section
immediately from an equivalent of Lemma~\ref{lem:vertex-minor-tree-depth} for
BSC-depth and Proposition~\ref{thm:shrubsc}.}

\longversion{The main result of this section now immediately follows from
Lemmas~\ref{lem:BSC-from-pivot}, \ref{lem:BSC-SC-relation}\,\ref{it:bipSC})
and Proposition~\ref{thm:shrubsc}.}

\begin{THM}
\label{thm:tdtoshrubd-bip}
For any class $\cf S$ of bounded shrub-depth, there exists an integer $d$
such that every {\em bipartite} graph in $\cf S$ 
is a pivot-minor of a graph of tree-depth~$d$.
\end{THM}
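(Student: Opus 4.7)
The plan is to introduce a bipartite analogue of SC-depth, call it BSC-depth, that replaces the single-set complementation $\overline{H}^X$ used in Definition~\ref{def:SC-depth} by a bipartite complementation $\overline{H}^{(X,Y)}$ on a pair of disjoint sets $X,Y\subseteq V(H)$, which flips exactly the edges between $X$ and $Y$. The inductive definition of $\BSC k$ then parallels that of $\SC k$. With this parameter in hand, the theorem will follow by chaining three facts: (a) classes of bounded shrub-depth have bounded SC-depth (Proposition~\ref{thm:shrubsc}); (b) a bipartite graph of SC-depth $k$ has BSC-depth at most $k$; (c) a graph of BSC-depth $k$ is a pivot-minor of a graph of tree-depth at most $2k+1$.

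For (b) I would argue by induction on the SC-depth construction of a bipartite graph $G$ with fixed bipartition $V(G)=A\dot\cup B$. Whenever the SC-depth construction calls for the operation $\overline{H}^X$, I replace it by $\overline{H}^{(X\cap A,\, X\cap B)}$. The resulting graph $G'\in \BSC k$ agrees with $G$ on all pairs crossing the bipartition (these are flipped in both constructions) and contains no edges within $A$ or within $B$ at the top level because $G$ is bipartite; a straightforward inductive check shows $G' = G$.

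For (c) I would mimic the proof of Lemma~\ref{lem:vertex-minor-tree-depth}. If $G=\overline{H}^{(X,Y)}$ with $H=H_1\dot\cup\cdots\dot\cup H_p$ and each $H_i\in \BSC{k-1}$, take inductively chosen graphs $U_i\supseteq H_i$ with tree-depth decomposition $T_i$ of height at most $2k-1$, witnessing $H_i$ as a pivot-minor of $U_i$ by pivots only on $V(U_i)\setminus V(H_i)$. To form $U$ I introduce two fresh vertices $r_x,r_y$, join them by an edge, connect $r_x$ to every vertex of $X$ and $r_y$ to every vertex of $Y$, and take the disjoint union with all $U_i$. The forest $T$ is obtained from $\bigcup_i T_i$ by making $r_y$ a new root with child $r_x$ and attaching all the roots $r_i$ of $T_i$ under $r_x$; this has height $2k+1$ and $U\subseteq\cl(T)$. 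The earlier pivot sequence recovers $H\subseteq U$ without touching $r_x,r_y$ or their incidences, and then one final pivot on the edge $r_xr_y$ flips exactly the pairs in $X\times Y$; deleting $\{r_x,r_y\}\cup (V(U)\setminus V(G))$ yields $G$.

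The main technical point will be verifying that the single pivot $U\wedge r_xr_y$ implements precisely the bipartite complementation $\overline{H}^{(X,Y)}$. This is a direct but careful application of $G\wedge uv = G\ast u\ast v\ast u$: since in $U$ the vertex $r_x$ has neighbourhood $X\cup\{r_y\}$ and $r_y$ has neighbourhood $Y\cup\{r_x\}$, with $X\cap Y=\emptyset$ and no edges of $U$ between $X$ and $Y$ beyond those already in $H$, a case analysis on pairs (both in $X$; both in $Y$; one in each; at least one outside $X\cup Y$) shows that exactly the $X\times Y$ pairs are toggled and nothing else. Once this lemma is in place, combining (a), (b) and (c) gives $d=2k+1$ and proves Theorem~\ref{thm:tdtoshrubd-bip}.
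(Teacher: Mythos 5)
Your proposal matches the paper's own proof essentially step for step: you introduce the same notion of $\BSC k$ via bipartite complementation $\overline{H}^{(X,Y)}$, prove the same three facts (shrub-depth bounds SC-depth; bipartite graphs have BSC-depth at most their SC-depth by intersecting each $X$ with the two sides of the bipartition; BSC-depth $k$ implies pivot-minor of tree-depth $2k+1$ via two fresh vertices $r_x, r_y$ joined by an edge with $r_x$ attached to $X$ and $r_y$ to $Y$), and chain them in the same order. The only cosmetic difference is that you verify the effect of the pivot $r_x r_y$ by unwinding $G\ast u\ast v\ast u$ rather than invoking the standard tripartition description of pivoting, but this is the same computation.
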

\longversion{\smallskip}
}

\section{Conclusions}

We finish the paper with two questions that naturally arise from our
investigations.
While the first question has a short negative answer, the second one is left
as an open problem.

A {\em cograph} is a graph obtained from singleton vertices by repeated
operations of disjoint union and (full) complementation.
This well-studied concept has been extended to so called ``$m$-partite
cographs'' in \cite{GHNOOR12} (we skip the technical definition here for
simplicity); where cographs are obtained for $m=1$.
It has been shown in \cite{GHNOOR12} that $m$-partite   
cographs present an intermediate step between classes of bounded shrub-depth
and those of bounded clique-width.

The first question is whether some of our results can be extended from classes of
bounded shrub-depth to those of $m$-partite cographs.
We know that shrub-depth is monotone under taking vertex-minors
(Corollary~\ref{cor:shrubd-monotone}) and an analogous claim is
asymptotically true also for clique-width \cite{os06}.
However, the main obstacle to such an extension is the fact that $m$-partite
cographs do not behave well with respect to local and pivot equivalence of
graphs. To show this we will employ the following proposition:

\begin{PROP}[\cite{GHNOOR12}]
A path of length $n$ is an $m$-partite cograph if, and only if, $n<3(2^m-1)$.
\end{PROP}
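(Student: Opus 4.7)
The plan is to proceed by induction on $m$, using the arithmetic identity $3(2^m-1) = 2\cdot 3(2^{m-1}-1) + 4$ as the backbone of the recursion. Writing $f(m) = 3(2^m-1) - 1$ for the maximal length to be realised, one has the clean recurrence $f(m) = 2f(m-1) + 4$, which already suggests the shape of the construction (``two copies plus a short connector'').

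The base case $m=1$ is the classical fact that cographs are exactly the $P_4$-free graphs. Since $P_4$ has length $3$, a path $P_n$ is a cograph if and only if $n \le 2$, matching the bound $n < 3(2^1-1)$.

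For the inductive step, both directions exploit the recursive structure of $m$-partite cographs as obtained from $(m-1)$-partite cographs by one additional ``colour-and-complement'' layer. For the ``if'' direction I would take two disjoint copies of the longest $(m-1)$-partite cograph path, of length $f(m-1)$ by the induction hypothesis, together with a short connector whose vertices carry the fresh $m$-th colour. One single top-level complementation between the $m$-th colour class and appropriately chosen endpoint labels in the two sub-paths can then be arranged to supply exactly the edges needed to concatenate the two copies and the connector into one path of length $2 f(m-1) + 4 = f(m)$. For the ``only if'' direction, suppose for contradiction that $P_n$ with $n \ge 3(2^m-1)$ is an $m$-partite cograph, and look at the top-level operation of its construction tree. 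Since $P_n$ is connected the top operation cannot be a disjoint union, so it must be a complementation acting on a disjoint union of $(m-1)$-partite cographs. Removing the edges added by this final operation splits $P_n$ into a bounded number of sub-paths, each of which is induced in one connected piece of the disjoint union and hence itself an $(m-1)$-partite cograph. By the induction hypothesis each sub-path has length at most $f(m-1)$, and an edge count together with the bound on the number of sub-paths yields a contradiction with $n \ge f(m) + 1$.

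The main obstacle will be making the edge-count in the ``only if'' part tight. The number of new path-edges that the top-level complementation can introduce is severely restricted by the requirement that the resulting graph be a path: no $C_4$ or $K_{2,2}$ may appear among the complemented bipartite edges, which forces at least one of the two complemented colour classes to be essentially a singleton. Matching this structural restriction precisely against the $+4$ slack in the recurrence $f(m) = 2f(m-1) + 4$ is the technical heart of the argument, and is what pins down the constant $3$ in the final bound $3(2^m-1)$.
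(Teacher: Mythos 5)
The paper cites this proposition from~\cite{GHNOOR12} and does not prove it, so there is no in-paper proof to compare against.

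Your ``only if'' direction has a genuine gap. You delete the cross-edges contributed by the root of the cotree and assert that the surviving path fragments are $(m-1)$-partite cographs. That does not follow from the definition: in the cotree model each vertex is assigned one of $m$ colours once and for all, and \emph{every} internal node of the cotree---not only the root---carries an $[m]\times[m]$ adjacency function, so a child subtree of the root induces a graph realised with the same $m$-colour palette, i.e.\ again an $m$-partite cograph, not an $(m-1)$-partite one. There is no ``layer'' whose removal consumes a colour, and the induction on $m$ does not close as stated. What is missing is a separate structural step: show that, for a path, the colours that actually produce cross-edges at the root can be carried by only a bounded number of vertices (since a cross-active vertex of colour $c$ is adjacent to \emph{every} partner-coloured vertex outside its own subtree, while a path has maximum degree $2$), and then argue that this permits recolouring the bulk of the path using only $m-1$ colours. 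The ``if'' direction is also underspecified as written: if all three connector vertices carry the fresh colour $m$, then a root-level function making colour $m$ adjacent to an endpoint colour attaches \emph{all} three connector vertices to both sub-path endpoints, which is not a path. Your recurrence $f(m)=2f(m-1)+4$ does identify the right numerology, but neither implication is actually established by the sketch.
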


By the proposition, to negatively answer our question it is enough to find a
class of $m$-partite cographs containing long paths as pivot-minors:

\shortversion{
\begin{PROP}
\label{prop:Hn-path}
For every $n\geq1$, the graph $H_n$ from~\cite[Example 5.4 a)]{GHNOOR12} is a cograph that
contains a path of length $n$ as a pivot-minor.
\end{PROP}
}

\longversion{
\begin{PROP}
\label{prop:Hn-path}
Let $H_n$ denote the graph on $2n$ vertices from Figure~\ref{fig:cograph2n}.
Then $H_n$ is a cograph for each $n\geq1$,
and $H_n$ contains a path of length $n$ as a pivot-minor.
\end{PROP}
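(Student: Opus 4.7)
\emph{Proof plan.}
The proposition has two independent parts, and my approach treats them separately.

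The first claim, that $H_n$ is a cograph, I would verify by reading off a cotree directly from the construction depicted in Figure~\ref{fig:cograph2n}: an internal tree whose non-leaf nodes alternate between disjoint-union and join (i.e.\ complement of disjoint union) and whose leaves are the singleton vertices of $H_n$. This is precisely the recursive definition of a cograph. As a cross-check one may instead invoke Seinsche's $P_4$-free characterisation and argue the absence of an induced $P_4$ by a short case analysis on which blocks of the cotree the four chosen vertices can lie in.

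The second claim, that $H_n$ contains a path of length $n$ as a pivot-minor, I would establish via the matrix formulation of Proposition~\ref{prop:pivots-minors-matrices}: it suffices to produce subsets $X,Y\subseteq V(H_n)$ with $\mx A(H_n)[X]$ non-singular and $(\mx A(H_n)\ast X)[Y]$ equal to the adjacency matrix of $P_n$. I would take $Y$ to be a ``spine'' of $n+1$ vertices chosen along a root-to-leaf chain of the cotree, and $X$ the set of remaining auxiliary vertices, organised as a matching; by Lemma~\ref{lem:reorder} and Theorem~\ref{thm:22} the subset pivot on $X$ can equivalently be carried out as a commuting sequence of single-edge pivots on that matching, which is convenient for the adjacency bookkeeping.

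The principal obstacle is the combinatorial verification that $(\mx A(H_n)\ast X)[Y]$ has exactly the tridiagonal $0/1$ pattern of $P_n$ — neither more nor fewer edges than the $n$ consecutive ones. I would attack this by induction on $n$: the recursive cograph structure of $H_n$ should let one split the pivot calculation, after the topmost matching edge is pivoted, into two instances of a smaller $H_{n-1}$-type problem, each of which yields a path of length $n-1$ by the inductive hypothesis; the topmost pivot then glues these two sub-paths into a single path of length $n$. The delicate point is checking that every ``cross'' adjacency between non-consecutive spine vertices cancels in the Schur complement, which is precisely where the symmetric recursive design of $H_n$ (and the fact that $H_n$ is a cograph rather than an arbitrary graph) is used.
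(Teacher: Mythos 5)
Your cograph argument is essentially the paper's: $H_n$ is a threshold graph, built by alternately adding an isolated vertex $a_j$ and then a dominating vertex $b_j$, so its cotree can be read off directly and the claim follows.

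The pivot-minor plan, however, has two concrete gaps and would not go through as written. First, the choice of pivot set: you take $Y$ to be the $(n+1)$-vertex spine (which must be the vertex set of the final path, namely $\{a_1,b_1,b_2,\ldots,b_n\}$) and $X$ to be ``the remaining auxiliary vertices,'' i.e.\ $X=V(H_n)\setminus Y=\{a_2,\ldots,a_n\}$. But the $a_i$ form an independent set in $H_n$, so $\mx A(H_n)[X]$ is the zero matrix and the pivot $\mx A(H_n)*X$ is undefined; nor is there any matching on $X$ to feed into Lemma~\ref{lem:reorder}. The set one actually has to pivot on is $\{a_2,b_2,a_3,b_3,\ldots,a_{n-1},b_{n-1}\}$, which is \emph{not} disjoint from $Y$: the vertices $b_2,\ldots,b_{n-1}$ are both pivoted on and retained. (Proposition~\ref{prop:pivots-minors-matrices} does allow $X$ and $Y$ to overlap, and here they must.) Second, the proposed induction---split into two smaller $H_{n-1}$-type problems, each giving a path of length $n-1$, then glue---does not match the structure of $H_n$: deleting a top pair $a_j,b_j$ yields a \emph{single} copy of a smaller $H$, not two, and gluing two paths of length $n-1$ along a pivoted edge would produce something of length about $2n-2$, not $n$. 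The paper instead performs a direct sequential computation: pivot $a_jb_j$ for $j=2,\ldots,n-1$ one at a time, observe that each pivot trims the neighbourhood of $b_{j-1}$ down to $\{a_{j-1},a_j,b_j\}$ while the graph on $\{a_j,\ldots,a_n,b_j,\ldots,b_n\}$ remains of the same $H$-form, and conclude that $\{a_1,b_1,b_2,\ldots,b_n\}$ induces a path in the end. Passing through the matrix formulation and Lemma~\ref{lem:reorder}, once the correct $X$ is identified, would just reproduce this sequence of edge pivots, so the detour buys nothing.
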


\begin{figure}[t]
\begin{center}
\includegraphics[width=.55\textwidth]{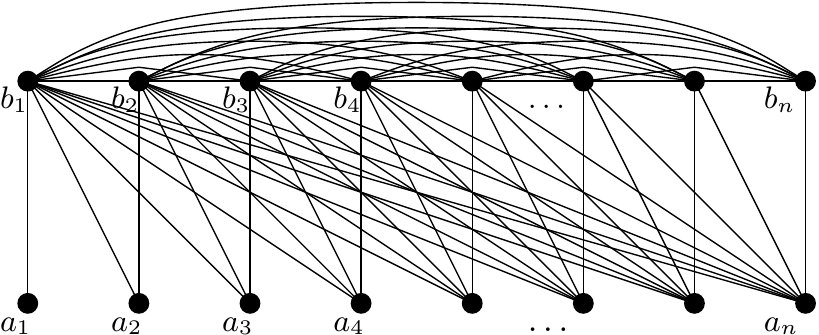}
\end{center}
\caption{A graph on $2n$ vertices \cite{GHNOOR12} which is a cograph
and pivoting on $a_2b_2,a_3b_3,\dots,a_{n-1}b_{n-1}$ results in an induced path
on $a_1,b_1,b_2,\dots,b_n$.}
\label{fig:cograph2n}
\end{figure}




\begin{proof}
It is $V(H_n)=\{a_i,b_i: i=1,2,\dots,n\}$ and
$E(H_n)=\{b_ib_j:$ $1\leq i<j\leq n\}\cup
 \{b_ia_j:$ $1\leq i\leq j\leq n\}$.
The graph $H_n$ can be constructed iteratively as follows, for $j=n,n-1,\dots,1$:
add a new vertex $a_j$, complement all the edges of the graph,
add a new vertex $b_j$, complement again.
Consequently, $H_n$ is a cograph (and, in fact, a so called threshold graph).

For the second part, we let inductively $G_1:=H_n$
and $G_{j}:=G_{j-1}\wedge a_jb_j$ for $j=2,\dots,n-1$.
Then, by the definition, $G_{2}$ is obtained from $H_n$ by removing
all the edges incident with $b_1$ except $b_1a_2,b_1b_2$.
In particular, $G_{2}\setminus\{a_1,b_1\}$ is isomorphic to $H_{n-1}$,
and $a_3,b_3$ are adjacent in $G_{2}$ only to vertices other than $a_1,b_1$.
Consequently, by induction, $G_j$ is obtained from $G_{j-1}$ by removing all
the edges incident with $b_{j-1}$ except $b_{j-1}a_{j},b_{j-1}b_{j}$,
and $G_{n-1}$ has the edge set $\{b_1b_2,b_2b_3,\dots,b_{n-1}b_n\}\cup
 \{a_1b_1,a_2b_1,a_2b_2,a_3b_2,\dots,a_nb_n\}$.
Then $G_{n-1}[a_1,b_1,b_2,\dots,b_n]$ is a path.
\end{proof}
}

Building on this negative result, it is only natural to ask whether not having 
a long path as vertex-minor is the property exactly characterising shrub-depth.

\begin{CONJ}
     A class $\cf C$ of graphs is of bounded shrub-depth if, and only if,
     there exists an integer $t$ such that no graph $G\in\cf C$
     contains a path of length $t$ as a vertex-minor. 
\end{CONJ}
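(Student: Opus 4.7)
The plan is to establish the two implications separately. The forward direction (bounded shrub-depth forces the exclusion of some long path as a vertex-minor) should be routine; the backward direction (excluding $P_t$ as a vertex-minor forces bounded shrub-depth) is the essential structural content of the conjecture and is the main obstacle.

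For the forward direction, I would first show that the class $\{P_n : n\ge 1\}$ of all paths itself has unbounded shrub-depth. From Definition~\ref{def:tree-model}, if $P_n\in\TM{d}{m}$ via a tree-model $T$, then the adjacency of any two leaves is a function of their colour pair and their distance in $T$, taking at most $m^2\cdot(2d{+}1)$ possible values. A Ramsey-type pigeonhole argument on long subpaths then shows that for $d,m$ fixed, all but finitely many $P_n$ fail to have a consistent tree-model; in particular, $\{P_n\}\not\subseteq\TM{d}{m}$ for any fixed $d,m$. Combined with Corollary~\ref{cor:shrubd-monotone}, if $\cf C$ had bounded shrub-depth yet contained $P_t$ as a vertex-minor for arbitrarily large $t$, then the path class would have bounded shrub-depth, a contradiction.

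For the backward direction I would argue by contrapositive: assuming $\cf C$ has unbounded shrub-depth, construct arbitrarily long induced paths as vertex-minors of members of $\cf C$. The natural route is via the SC-depth characterisation: unbounded shrub-depth in $\cf C$ yields members of arbitrarily large SC-depth, hence recursive decompositions as in Definition~\ref{def:SC-depth} of unbounded nesting. One then wants a Ramsey-type extraction along the spine of such a decomposition: at each level, a disjoint union of deep building blocks is ``flipped'' on some $X$, and one aims to pick one vertex per level so that, after a carefully chosen sequence of local complementations applied to the non-selected vertices, the selected ones form an induced path. Together with Theorem~\ref{thm:tdtoshrubd}, this would reduce the problem to finding a long path vertex-minor inside a bounded-tree-depth host graph that carries a given unbounded SC-depth ``witness''.

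The hard part will be precisely this extraction. Unlike the Nešetřil--Ossona de Mendez theorem for tree-depth, where a long root-to-leaf path in a Trémaux tree yields a long induced path directly, local complementations completely reshuffle adjacency, so the depth witness must be turned into a long path through a globally coordinated vertex-minor operation rather than an induced-subgraph extraction. A promising strategy is to adapt Oum's vertex-minor well-quasi-ordering tools and the rank-width analogue of the grid theorem (Geelen--Kwon--McCarty--Wollan) to the depth regime, but this seems to require a new ``depth grid theorem'' identifying a canonical unavoidable obstruction to bounded shrub-depth under vertex-minors---and then showing that every such obstruction itself contains a long path vertex-minor. It is this step that, to the best of my knowledge, remains open and justifies the statement being phrased as a conjecture rather than a theorem.
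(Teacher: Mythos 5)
This statement is labelled \texttt{CONJ} in the paper and carries no proof --- it is an open problem, so there is no ``paper's own proof'' to compare against. You have diagnosed this correctly: the forward implication is routine while the backward implication is the genuine mathematical content, and it remains open.

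On the forward direction, your logic is sound: by Corollary~\ref{cor:shrubd-monotone} the class of vertex-minors of members of $\cf C$ has the same shrub-depth as $\cf C$, so it suffices to know that $\{P_n : n\geq 1\}$ has unbounded shrub-depth. You propose a Ramsey/pigeonhole argument on a hypothetical tree-model; this is plausible but requires more care than ``at most $m^2(2d{+}1)$ types,'' since a depth-$d$ tree-model can have arbitrarily large branching, so the type count alone does not force repeats among consecutive path vertices and one must exploit twin structure more carefully. In fact the paper itself supplies a shortcut you could have used directly: the cited proposition from~\cite{GHNOOR12} that $P_n$ is an $m$-partite cograph only if $n < 3(2^m{-}1)$, combined with the fact (also from~\cite{GHNOOR12}) that every class of bounded shrub-depth is contained in the $m$-partite cographs for some fixed $m$. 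That gives the needed unboundedness of shrub-depth of paths without a fresh Ramsey argument.

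On the backward direction, you are right that it is the hard part and that it is open; your high-level plan (pass to SC-depth witnesses, extract a path by a coordinated sequence of local complementations, perhaps guided by a ``depth grid theorem'') is a reasonable research programme, but as you yourself note, no such extraction lemma is currently known, and the analogy to Trémaux-tree extraction breaks down precisely because local complementation destroys induced-subgraph reasoning. There is no error in your write-up, but one should be clear that this is a proposal of a strategy, not a proof; the paper phrases the statement as a conjecture for exactly the reason you identify.
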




\bibliographystyle{abbrv}
\bibliography{gtbib}

\begin{thebibliography}{10}

\bibitem{CO07}
B.~Courcelle and S.~il~Oum.
\newblock Vertex-minors, monadic second-order logic, and a conjecture by seese.
\newblock {\em J. Comb. Theory, Ser. B}, 97(1):91--126, 2007.

\bibitem{CO00}
B.~Courcelle and S.~Olariu.
\newblock Upper bounds to the clique width of graphs.
\newblock {\em Discrete Appl. Math.}, 101(1-3):77--114, 2000.

\bibitem{GH14}
J.~Gajarsk{\'y} and P.~Hlin{\v{e}}n{\'y}.
\newblock Kernelizing {MSO} properties of trees of fixed height, and some
  consequences.
\newblock arXiv:1204.5194, 2014.

\bibitem{GHNOOR12}
R.~Ganian, P.~Hlin\v{e}n\'y, J.~Ne\v{s}et\v{r}il, J.~Obdr\v{z}\'alek, P.~O.
  de~Mendez, and R.~Ramadurai.
\newblock When trees grow low: Shrubs and fast {MSO$_1$}.
\newblock In {\em MFCS'12}, volume 7464 of {\em LNCS}, pages 419--430.
  Springer, 2012.

\bibitem{Geelen1995}
J.~F. Geelen.
\newblock {\em Matchings, matroids and unimodular matrices}.
\newblock PhD thesis, University of Waterloo, 1995.

\bibitem{Kwon2013}
O.~Kwon and S.~Oum.
\newblock Graphs of small rank-width are pivot-minors of graphs of small
  tree-width.
\newblock {\em Discrete Appl. Math.}, 168:108--118, 2014.

\bibitem{no06}
J.~Ne\v{s}et\v{r}il and P.~Ossona~de Mendez.
\newblock Tree-depth, subgraph coloring and homomorphism bounds.
\newblock {\em European J. Combin.}, 27(6):1024--1041, 2006.

\bibitem{Oum05}
S.~Oum.
\newblock Rank-width and vertex-minors.
\newblock {\em J. Comb. Theory Ser. B}, 95(1):79--100, 2005.

\bibitem{os06}
S.~Oum and P.~D. Seymour.
\newblock Approximating clique-width and branch-width.
\newblock {\em J. Combin. Theory Ser. B}, 96(4):514--528, 2006.

\bibitem{Rab64}
M.~O. Rabin.
\newblock A simple method for undecidability proofs and some applications.
\newblock In Y.~Bar-Hillel, editor, {\em Logic, Methodology and Philosophy of
  Sciences}, volume~1, pages 58--68. North-Holland, Amsterdam, 1964.

\bibitem{Tsatsomeros2000}
M.~J. Tsatsomeros.
\newblock Principal pivot transforms: properties and applications.
\newblock {\em Linear Algebra Appl.}, 307(1-3):151--165, 2000.

\bibitem{Tucker60}
A.~W. Tucker.
\newblock A combinatorial equivalence of matrices.
\newblock In R.~Bellman and M.~Hall, Jr., editors, {\em Combinatorial
  Analysis}, pages 129--140. American Mathematical Society, Providence, R.I.,
  1960.

\end{thebibliography}

\end{document}